\def\ps@pprintTitle{%
    \let\@oddhead\@empty
    \let\@evenhead\@empty
    \def\@oddfoot{\footnotesize\itshape
         {Submitted preprint} \hfill\today}%
    \let\@evenfoot\@oddfoot
    }
\newcolumntype{x}[1]{%
>{\centering\hspace{0pt}}p{#1}}%
\definecolor{Gray}{gray}{0.9}
\newcolumntype{g}{>{\columncolor{Gray}}c}
\def\iid{\buildrel {\rm i.i.d.} \over \sim}
\def\i.i.d.{\buildrel {\rm i.i.d.} \over \sim}
\def\cw#1 { \overset{\mathbb{P}}{\underset{#1}{\longrightarrow}} }
\def\Real{\mathbb{R}}
\def\Natu0{\mathbb{N}_0}
\def \rcov#1#2 {{\rm cov}_{#1}\left( #2\right)}
\DeclareMathOperator*{\argmin}{arg\,min}
\newtheorem{example}{Example}
\newtheorem{lemma}{Lemma}
\newtheorem{proposition}{Proposition}
\newtheorem*{toy*}{Toy Model}
\newtheorem{model}{Model}
\begin{document}
\begin{frontmatter}
\title{Optimal relativities in a modified Bonus-Malus system with\\long memory transition rules and frequency-severity dependence}

\author[EH]{Jae Youn Ahn}
\ead{jaeyahn@ewha.ac.kr}
\author[UN]{Eric C.K. Cheung}
\ead{eric.cheung@unsw.edu.au}
\author[KA]{Rosy Oh}
\ead{rosy.oh5@gmail.com}
\author[UN]{Jae-Kyung Woo$^*$}
\ead{j.k.woo@unsw.edu.au}

\address[EH]{Department of Statistics, Ewha Womans University, Seodaemun-Gu, Seoul 03760, Korea.}
\address[UN]{School of Risk and Actuarial Studies, Australian School of Business, University of New South Wales, Australia.}
\address[KA]{Department of Industrial and Systems Engineering, Korea Advanced Institute of Science and Technology (KAIST), Daejeon 34141, Korea.}
\cortext[cor2]{Corresponding Author}

\begin{abstract}
In the classical Bonus-Malus System (BMS) in automobile insurance, the premium for the next year is adjusted according to the policyholder's claim history (particularly frequency) in the previous year. Some variations of the classical BMS have been considered by taking more of driver's claim experience into account to better assess individual's risk. Nevertheless, we note that in practice it is common for a BMS to adopt transition rules according to the claim history for the past multiple years in countries such as Belgium, Italy, Korea, and Singapore. In this paper, we revisit a modified BMS which was briefly introduced in \citet{L1995} and \citet{PDW03}. Specifically, such a BMS extends the number of Bonus-Malus (BM) levels due to an additional component in the transition rules representing the number of consecutive claim-free years. With the extended BM levels granting more reasonable bonus to careful drivers, this paper investigates the transition rules in a more rigorous manner, and provides the optimal BM relativities under various statistical model assumptions including the frequency random effect model and the dependent collective risk model. Also, numerical analysis of a real data set is provided to compare the classical BMS and our proposed BMS.


\end{abstract}

\begin{keyword}
 Bonus-Malus System\sep Extended Bonus-Malus Levels \sep Optimal Relativity\sep Frequency-severity Dependence\sep A Priori Classification\sep A Posteriori Ratemaking.
\end{keyword}

\end{frontmatter}

\section{Introduction}

From \citet{L1998}, as far as automobile insurance is concerned, insurers in the U.S. and Canada mainly use many of a priori variables (observable factors) to classify the risks of the policyholders while some other countries in Europe, Asia, Latin America and Africa use also other a posteriori rating. The common a posteriori ratemaking process can be done via Bonus-Malus System (BMS), and such a posteriori ratemaking possibly resolves residual heterogeneity still remaining within risk classes. As in Section 1 of \citet{L1998}, although BMS is not commonly used in North America, the situation can change later since regulatory authorities may prohibit insurers from using certain classification variables, and moreover a posteriori rating is known to be an efficient way to assess individual risks.
Based on the policyholder's claim experience, such system discounts premium as a reward of claim-free case (bonus) and charges additional premium in the presence of accidents (malus). This a posteriori ratemaking mechanism takes on various forms across countries but it generally consists of three major components: Bonus-Malus (BM) levels; transition rules; and BM relativities. Each new policyholder is classified into one of the finite number of risk classes depending on his/her observable risk characteristics, and he/she is also assigned an initial BM level. After each policy period, the BM level moves up or down according to the prespecified transition rules penalizing based on the number of reported claims and rewarding those without claim. In a typical BMS, claim experience in the previous policy period is usually utilized to determine the transition rule. However, for better separation of consistently and temporarily good (or bad) risks, we consider to capture claim experience in the multiple previous policy periods in the design of the transition rule like in Belgium. Then, the premium is renewed at a rate which is the product of the BM relativity and the base premium. Here the BM relativity is determined by the BM level which plays a role in the premium adjustment coefficient. In this way, the actual premium charged to policyholders in each BM level could match more closely to their risks reflected in the claim number record for the past few policy periods and consequently the claim costs are fairly shared. In the classical BMS, the base premium is often calculated as a product of the mean frequency and the mean severity due to the assumed independence between these two random variables, e.g. \cite{Pitrebois2003, Denuit2, Chong}. However, the independence assumption is often criticized since some recent insurance reports demonstrate the existence of significant dependence between claim frequency and severity. In the actuarial literature, various statistical models were developed in attempt to capture certain dependency structures between frequency and severity of claims. These include, for example, copula models \citep{Czado, Frees4}, shared or bivariate random effect models \citep{
 Bastida, Czado2015, 
 PengAhn}, and two-part models \citep{Peng, Garrido, Park2018does, AhnValdez2}.

The focus of this paper is two-fold. First, we revisit the extension of the BM levels \citep{L1995, PDW03} and define our long memory transition rules in a more formal setting.
Second, under the proposed transition rules, we consider the optimal relativities not only in a frequency-only model but also in a dependent collective risk model that allows for dependence between frequency and severity.
As an example of the dependent collective model, we consider the copula-based bivariate random effect model as in \citet{PengAhn}, among many choices of shared or bivariate random effect models. 
While the policyholder's BM level goes down if there is no claim in the past year in the classical BMS, the BM level in our model moves to a lower one when he/she does not have any claim history consecutively for years equivalent to the ``period of penalty'' in this modified BMS. Similar case was discussed in Chapter 17 of \cite{L1985} and Section 5 of \cite{PDW03} where the transition is made based on the number of consecutive claim-free years. However, the optimal relativities in this paper are obtained by solving different forms of optimization problems. Under two models, namely the frequency random effect model and the bivariate random effect model, the optimal relativities and the hypothetical mean square error (HMSE) are studied. The rest of the paper is organized as follows. In Section \ref{bm.sec}, the description of two random effect models and the optimal relativity results under the classical BMS rules are given. In Section \ref{mBMS}, a period of the penalty (denoted as ``$pen$'') is introduced in the transition rules such that the BM level goes down only when there is no claim for the last consecutive $(1+pen)$ years. Under the modified transition rules, an ``augmented'' BM level is newly defined so that its transition process has the Markov property. In turn, the optimal relativities are provided under the two different models.
Finally, in Section \ref{num}, a data set from the Wisconsin Local Government Property Insurance Fund (e.g. \cite{Frees4}) is utilized to illustrate the impact of the long memory transition rules on our BMS compared to the classical BMS.

\section{Model descriptions and basic results}\label{bm.sec}



For the $i$-th policyholder in the $t$-th policy year, we let $N_{it}$ be the number of claims and $\boldsymbol{Y}_{it}=(Y_{it1}, \ldots, Y_{itN_{it}})$ be the vector of associated claim amounts, where $Y_{itj}$ is the $j$-th claim amount. (Note that $\boldsymbol{Y}_{it}$ is undefined when $N_{it}=0$.) Also, the aggregate claim amount and average claim amount are defined as
\begin{equation}\label{sumaverageDef}
S_{it}:=\begin{cases}
\sum_{j=1}^{N_{it}}Y_{itj}, & N_{it}>0\\
0, &N_{it}=0
\end{cases}
\quad\quad
\hbox{and}
\quad\quad
M_{it}:=\begin{cases}
\frac{1}{N_{it}}\sum_{j=1}^{N_{it}}Y_{itj}, & N_{it}>0,\\
\hbox{0}, &N_{it}=0,
\end{cases}
\end{equation}
respectively. Clearly, these two quantities are directly linked as $S_{it}=N_{it}M_{it}$. 

In the insurance ratemarking process, a priori ratemaking results in risk classification of the (new) policyholder whose past claim history is insufficient  for the ratemaking purposes. Using the observed risk characteristics of the policyholders (collected in the row vector $\boldsymbol{X}_i$ for the $i$-th policyholder), their risk classes are determined and the a priori premiums are also obtained. Then, the residual heterogeneity is explained by capturing the unobserved risk characteristics of policyholders (denoted as $\Theta_i$ for the $i$-th policyholder) in a posteriori ratemaking process and consequently their premium amounts are adjusted based on the claim history. Here, $\boldsymbol{X}_i$ and $\Theta_i$ do not depend on time. In general, the frequency part may be considered enough to construct the BMS under the assumption of independence between frequency and severity. However, recent BMS research in \citet{PengAhn} reveals that the dependence between frequency and severity plays a critical role in the determination of optimal relativities in case of significant dependence. In this regards, the superscripts [1] and [2] representing frequency and severity components respectively are introduced to variables and their corresponding realizations. When there is only the frequency component, we drop the notation [1] for convenience. To model the frequency and the severity of claims, the techniques of Generalized Linear Model (GLM) in \cite{deJong2008} with the assumption of exponential dispersion family for the random components will be applied (see \ref{EDF} for the descriptions).

\subsection{Random effect models}\label{REM}

For the unobserved risk characteristics in the a posteriori ratemating, let us describe two different models which take into account the randomness of frequency only or both frequency and severity. For convenience, we assume that there are $\mathcal{K}$ risk classes predetermined at the moment of when the contracts begin. We start with the descriptions of the model for the frequency part only.
\begin{model}[Frequency model with random effect]\label{mod.1}
First, the weight of the $k$-th risk class is defined as
  \begin{equation}\label{eq.401}
  w_k:= \mathbb{P}(\boldsymbol{X}=\boldsymbol{x}_{k}), \qquad k = 1,\ldots, \mathcal{K},
  \end{equation}
where $\boldsymbol{X}$ is the row vector of the observed risk characteristics of a randomly picked person from the population of policyholders, and $\boldsymbol{x}_{k}$ is a vector of the observed risk characteristics for the $k$-th risk class. Note that $w_k$ can be regarded as the fraction of policyholders with observed risk characteristics $\boldsymbol{x}_{k}$.
The a priori premium for the $i$-th policyholder is determined as $  \Lambda_{i}:=\eta^{-1} (\boldsymbol{X}_{i}\boldsymbol{\beta})$, where $\eta(\cdot)$ is a link function, $\boldsymbol{X}_i$ is the row vector of observed risk characteristics of this $i$-th policyholder, and $\boldsymbol{\beta}$ is a column vector of regression coefficients in the count regression model estimated from the past data concerning the number of reported claims. Incorporating unobserved risk component for the frequency\footnote{In practice, it is very hard to figure out all the possible combinations of the observed risk characteristics for the model. Hence, introduction of unobserved risk characteristics $\Theta_i$ is essential, and this provides theoretical and practical justifications for the a posteriori risk classification.}, the conditional distribution of the number of claims $N_{it}$ for the $i$-th policyholder in the $t$-th policy year given the observed risk characteristics $\boldsymbol{X}_{i}=\boldsymbol{x}_{i}$ and unobserved risk characteristics  $\Theta_i=\theta_i$ along with the distribution of $\Theta_i$ is provided in \ref{model1}.
\end{model}

The following model is an extension of two-part model \citep{Frees, Garrido} by introducing a bivariate random effect to model various types of dependence such as:\vspace{-.05in}
\begin{itemize}
  \item dependence among frequencies;\vspace{-.05in}
  \item dependence among severities; and \vspace{-.05in}
  \item dependence between frequency and severity.
\end{itemize}\vspace{-.1in}

\begin{model}[Collective risk model with bivariate random effect \citep{PengAhn}]\label{mod.2}


Recall that the superscripts [1] and [2] represent the frequency and the severity components respectively. As an extension of Model 1 above, the random component of the claim severity is also taken into account for a priori and a posteriori ratemaking processes. In contrast to (\ref{eq.401}), the weight of the $k$-th risk class is defined as
  \begin{equation*}
  w_k:= \mathbb{P}({\boldsymbol{X}^{[1]}=\boldsymbol{x}_{k}^{[1]},
  \boldsymbol{X}^{[2]}=\boldsymbol{x}_{k}^{[2]}
}), \qquad k = 1,\ldots, \mathcal{K},
  \end{equation*}
  where $(\boldsymbol{X}^{[1]},\boldsymbol{X}^{[2]})$ is the pair of row vectors containing the observed risk characteristics relevant to the frequency and severity of claims for a randomly picked person from the population of policyholders, and $(\boldsymbol{x}_k^{[1]},\boldsymbol{x}_k^{[2]})$ is the corresponding pair for the $k$-th risk class. The a priori premium for the $i$-th policyholder can be obtained from the information of the vector $(\Lambda_i^{[1]},\Lambda_i^{[2]})$, which is in turn determined by the observed risks characteristics as
\[
        \Lambda_{i}^{[1]}:=\eta_1^{-1} (\boldsymbol{X}^{[1]}_{i}\boldsymbol{\beta}^{[1]})
        \quad\hbox{and}\quad \Lambda_{i}^{[2]}:=\eta_2^{-1}(\boldsymbol{X}^{[2]}_{i}\boldsymbol{\beta}^{[2]}),
  \]
where $\eta_1(\cdot)$ and $\eta_2(\cdot)$ are link functions, $\boldsymbol{X}^{[1]}_i$ and $\boldsymbol{X}^{[2]}_i$ are the $i$-th policyholder's row vectors of observed risk characteristics for frequency and severity, and $\boldsymbol{\beta}^{[1]}$ and $\boldsymbol{\beta}^{[2]}$ are column vectors of parameters to be estimated. More importantly, the dependence between claim frequency and severity is also considered as the corresponding unobserved risk characteristics $(\Theta_i^{[1]},\Theta_i^{[2]})$ are modeled via a bivariate copula. More specific distributional descriptions of this model are provided in \ref{model2}.
\end{model}

It is instructive to note that when Model 2 is restricted to the claim frequency only, it is conceptually equivalent to Model 1.

\subsection{Optimal relativities}\label{OR}

In the classical BMS framework, common transition rules are such that the BM level is lowered by one for a claim-free year and increased by $h$ levels per claim, leading to what is referred to as the $-1/+h$ system. Based on such transition rules, the BM level for each policyholder evolves as a (discrete-time) Markov Chain. Let us denote $L$ as a random variable representing the BM level (from 0 to $z$) for a randomly picked policyholder from the population in a stationary state (independent of time). Its distribution (that is, the proportion of policyholders in each level) is determined by the frequency component as
\begin{equation}\label{statProb}
\mathbb{P}(L=\ell)
 =\sum_{k=1}^{\mathcal{K}} w_{k} \int \pi_\ell(\lambda_k\theta, \psi) g(\theta){\rm d}\theta, \qquad \ell=0, \ldots, {z},
\end{equation}
where $\pi_\ell(\lambda_k\theta, \psi)$ is the stationary probability that a policyholder with expected frequency $\lambda_k\theta$ is in level $\ell$. (It is understood that $\lambda_k=\eta^{-1} (\boldsymbol{x}_{k}\boldsymbol{\beta})$, $g(\cdot)$ and $\psi$ in \eqref{statProb} are replaced by $\lambda^{[1]}_k=\eta_1^{-1} (\boldsymbol{x}^{[1]}_{k}\boldsymbol{\beta}^{[1]})$, $g_1(\cdot)$ and $\psi^{[1]}$ respectively if one considers Model 2.) We shall provide explanations on how to obtain the stationary probabilities at the beginning of Section \ref{SEC32} for our more general model in relation to Example \ref{ex.2}. The relativity associated with the BM level $\ell$ is denoted by ${\zeta}(\ell)$. The following lemmas review how to determine the optimal relativity $\tilde{\zeta}(\ell)$ in two optimization settings under the model assumptions in Section \ref{REM}.

\begin{lemma}\normalfont \citep{Chong} \label{lem.1}
Consider the optimization problem
\begin{equation}\label{eq.ahn13}
(\tilde{\zeta}(0), \ldots, \tilde{\zeta}(z)):=\argmin_{({\zeta}(0), \ldots, {\zeta}(z))\in \Real^{{z+1}}} \mathbb{E}[(\Lambda\Theta-\Lambda{\zeta}(L))^2]
\end{equation}
under the frequency-only Model \ref{mod.1}. Because $\mathbb{E}[N_{it}|\Lambda_i,\Theta_i]=\Lambda_i\Theta_i$ (see \eqref{eq.402}) is the ``correct'' premium for the $i$-th policyholder if we knew $\Theta_i$, one can regard $\Lambda\Theta=\eta^{-1} (\boldsymbol{X}\boldsymbol{\beta})\Theta$ as the ``correct'' premium for a policyholder randomly picked from the population having observed risk characteristics $\boldsymbol{X}$ and unobserved risk characteristics $\Theta$. The optimization \eqref{eq.ahn13} amounts to choosing the relativities to minimize the mean squared difference between $\Lambda\Theta$ and the actual premium charged $\Lambda{\zeta}(L)$ when a policyholder is in BM level $L$. Then, the optimal relativities can be analytically calculated as
\begin{equation*}
\tilde{\zeta}(\ell):=\frac{\mathbb{E}[\Lambda^2 \Theta |L=\ell]}{\mathbb{E}[\Lambda^2| L=\ell]}
=\frac{ \sum_{k=1}^{\mathcal{K}} w_k \lambda_k^2\int  \theta  \pi_\ell(\lambda_k \theta,\psi) g(\theta){\rm d}\theta}
{\sum_{k=1}^{\mathcal{K}} w_k \lambda_k^2\int \pi_\ell(\lambda_k\theta,\psi)g(\theta){\rm d}\theta}
, \qquad \ell=0, \ldots, {z}.
\end{equation*}
\end{lemma}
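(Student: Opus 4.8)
The starting point is to exploit that the BM level $L$ takes values in the finite set $\{0,\ldots,z\}$, so that the events $\{L=\ell\}$ partition the underlying probability space. Writing $\mathbb{I}\{\cdot\}$ for the indicator, the objective in \eqref{eq.ahn13} decomposes as
\[
\mathbb{E}\bigl[(\Lambda\Theta-\Lambda\zeta(L))^2\bigr]
=\sum_{\ell=0}^{z}\mathbb{E}\bigl[(\Lambda\Theta-\Lambda\zeta(\ell))^2\,\mathbb{I}\{L=\ell\}\bigr].
\]
Since the $\ell$-th summand depends only on the single scalar $\zeta(\ell)$, the minimization over the vector $(\zeta(0),\ldots,\zeta(z))\in\mathbb{R}^{z+1}$ separates into $z+1$ independent univariate problems, and it suffices to minimize each summand in $\zeta(\ell)$ individually.

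\textbf{Key step: solving each univariate problem.} For fixed $\ell$, expand the $\ell$-th summand as a quadratic in $c:=\zeta(\ell)$:
\[
\mathbb{E}\bigl[(\Lambda\Theta-\Lambda c)^2\,\mathbb{I}\{L=\ell\}\bigr]
=\mathbb{E}\bigl[\Lambda^2\Theta^2\,\mathbb{I}\{L=\ell\}\bigr]
-2c\,\mathbb{E}\bigl[\Lambda^2\Theta\,\mathbb{I}\{L=\ell\}\bigr]
+c^2\,\mathbb{E}\bigl[\Lambda^2\,\mathbb{I}\{L=\ell\}\bigr].
\]
The leading coefficient $\mathbb{E}[\Lambda^2\,\mathbb{I}\{L=\ell\}]$ is strictly positive because $\Lambda>0$ and $\mathbb{P}(L=\ell)>0$ (the BM chain being irreducible in the stationary regime), so this is a strictly convex parabola in $c$ with a unique minimizer obtained by setting the derivative to zero:
\[
\mathbb{E}\bigl[\Lambda^2\Theta\,\mathbb{I}\{L=\ell\}\bigr]
=\tilde\zeta(\ell)\,\mathbb{E}\bigl[\Lambda^2\,\mathbb{I}\{L=\ell\}\bigr].
\]
Dividing through by $\mathbb{E}[\Lambda^2\,\mathbb{I}\{L=\ell\}]$ and rewriting the ratio with conditional expectations gives
$\tilde\zeta(\ell)=\mathbb{E}[\Lambda^2\Theta\mid L=\ell]/\mathbb{E}[\Lambda^2\mid L=\ell]$,
which is the first claimed expression; strict convexity guarantees this is the global minimum of the $\ell$-th summand, hence of the whole objective.

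\textbf{Translating to the integral form.} It remains to compute the two conditional expectations explicitly under Model \ref{mod.1}. I would condition in two stages via the tower property: first on the risk class, so that $\boldsymbol{X}=\boldsymbol{x}_k$ with probability $w_k$ and then $\Lambda=\lambda_k$; and then on $\Theta=\theta$, which has density $g$. Given $(\Lambda,\Theta)=(\lambda_k,\theta)$, the policyholder has expected frequency $\lambda_k\theta$, so by the construction underlying \eqref{statProb} the probability of occupying level $\ell$ in the stationary regime equals $\pi_\ell(\lambda_k\theta,\psi)$. Hence
\[
\mathbb{E}\bigl[\Lambda^2\Theta\,\mathbb{I}\{L=\ell\}\bigr]
=\sum_{k=1}^{\mathcal K} w_k\,\lambda_k^2\int \theta\,\pi_\ell(\lambda_k\theta,\psi)\,g(\theta)\,{\rm d}\theta,
\]
and likewise $\mathbb{E}[\Lambda^2\,\mathbb{I}\{L=\ell\}]=\sum_{k} w_k\lambda_k^2\int \pi_\ell(\lambda_k\theta,\psi)g(\theta)\,{\rm d}\theta$. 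Taking the ratio yields the stated closed form.

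\textbf{Main obstacle.} The genuinely delicate point is the two-stage conditioning in the last paragraph: one must be careful that, conditionally on $(\Lambda,\Theta)$, the event $\{L=\ell\}$ has probability exactly $\pi_\ell(\Lambda\Theta,\psi)$ and is otherwise independent of $(\Lambda,\Theta)$ beyond their product, which is precisely what the stationary Markov-chain setup around \eqref{statProb} provides. One also needs $\mathbb{P}(L=\ell)>0$ to license passing from the indicator identity to the ratio of conditional expectations. Apart from these measure-theoretic bookkeeping items, the argument is elementary convex optimization.
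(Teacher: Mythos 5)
Your proposal is correct and follows essentially the same route the paper takes: Lemma \ref{lem.1} itself is only cited from \citet{Chong}, but the paper's proof of its generalization, Proposition \ref{thm.oh.1}, uses exactly your argument --- decompose the objective over the (augmented) BM levels, minimize each resulting quadratic in $\zeta(\ell)$ separately, and then evaluate $\mathbb{E}[\Lambda^2\Theta\,\mathbb{I}\{L=\ell\}]$ by conditioning on $(\Lambda,\Theta)$, invoking the independence of $\boldsymbol{X}$ and $\Theta$ so that $\mathbb{P}(\Lambda=\lambda_k,\Theta\in{\rm d}\theta)=w_k g(\theta){\rm d}\theta$ and $\mathbb{P}(L=\ell\mid\Lambda=\lambda_k,\Theta=\theta)=\pi_\ell(\lambda_k\theta,\psi)$. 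The delicate points you flag (positivity of $\mathbb{P}(L=\ell)$ and the two-stage conditioning) are precisely the ones the paper handles in \eqref{numeratorProp1}.
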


\begin{lemma}\normalfont \citep{PengAhn} \label{lem.2}
Consider the optimization problem
\begin{equation}\label{opt2}
(\tilde{\zeta}(0), \ldots, \tilde{\zeta}(z)):=\argmin_{({\zeta}(0), \ldots, {\zeta}(z))\in \Real^{{z+1}}} \mathbb{E}[(\Lambda^{[1]}\Lambda^{[2]}\Theta^{[1]}\Theta^{[2]}-\Lambda^{[1]}\Lambda^{[2]}{\zeta}(L))^2]
\end{equation}
under the frequency-severity Model \ref{mod.2}. As $\mathbb{E}[S_{it}|\Lambda_i^{[1]},\Lambda_i^{[2]},\Theta_i^{[1]},\Theta_i^{[2]}]=\Lambda_i^{[1]}\Lambda_i^{[2]}\Theta_i^{[1]}\Theta_i^{[2]}$ (see \ref{model2}) is the ``correct'' premium for the $i$-th policyholder, $\Lambda^{[1]}\Lambda^{[2]}\Theta^{[1]}\Theta^{[2]}=\eta_1^{-1} (\boldsymbol{X}^{[1]}\boldsymbol{\beta}^{[1]}) \eta_2^{-1}(\boldsymbol{X}^{[2]}\boldsymbol{\beta}^{[2]})\Theta^{[1]}\Theta^{[2]}$ is the ``correct'' premium for a policyholder randomly picked from the population having observed risk characteristics $(\boldsymbol{X}^{[1]},\boldsymbol{X}^{[2]})$ and unobserved risk characteristics $(\Theta^{[1]},\Theta^{[2]})$. The optimization \eqref{opt2} is concerned with choosing the relativities to minimize the mean squared difference between $\Lambda^{[1]}\Lambda^{[2]}\Theta^{[1]}\Theta^{[2]}$ and the actual premium charged $\Lambda^{[1]}\Lambda^{[2]}{\zeta}(L)$ when a policyholder is in BM level $L$. Then, the optimal relativities can be analytically calculated as
\begin{align*}
\tilde{\zeta}(\ell):=&~\frac{\mathbb{E}[( \Lambda^{[1]}\Lambda^{[2]})^2  \Theta^{[1]}\Theta^{[2]}|L=\ell]}{\mathbb{E}[(\Lambda^{[1]} \Lambda^{[2]})^2|L=\ell]}\\
=&~\frac{\sum_{k=1}^{\mathcal{K}} w_{k} (\lambda_{k}^{[1]}\lambda_{k}^{[2]})^2  \int\int \theta^{[1]}  \theta^{[2]} \pi_\ell(\lambda_{k}^{[1]}\theta^{[1]},\psi^{[1]}) h(\theta^{[1]},\theta^{[2]}){\rm d}\theta^{[1]}{\rm d}\theta^{[2]}}{\sum_{k=1}^{\mathcal{K}} w_{k} (\lambda_{k}^{[1]}\lambda_{k}^{[2]})^2  \int \pi_\ell(\lambda_{k}^{[1]}\theta^{[1]},\psi^{[1]})g_1(\theta^{[1]}){\rm d}\theta^{[1]}}, \qquad \ell=0, \ldots, {z}.
\end{align*}
\end{lemma}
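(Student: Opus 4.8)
The plan is to derive both formulas by the standard projection argument for weighted least squares, treating each coordinate $\zeta(\ell)$ separately. For Lemma \ref{lem.1}, first I would note that the objective in \eqref{eq.ahn13} decomposes over the event $\{L=\ell\}$ because the relativities $\zeta(0),\ldots,\zeta(z)$ enter only through $\zeta(L)$, and for a fixed realization $L=\ell$ only the single parameter $\zeta(\ell)$ is active. Concretely, I would write
\[
\mathbb{E}[(\Lambda\Theta-\Lambda\zeta(L))^2]=\sum_{\ell=0}^{z}\mathbb{E}\bigl[(\Lambda\Theta-\Lambda\zeta(\ell))^2\,\mathbb{I}\{L=\ell\}\bigr],
\]
so the minimization separates into $z+1$ independent scalar quadratic problems. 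Differentiating the $\ell$-th summand with respect to $\zeta(\ell)$ and setting the derivative to zero gives $\mathbb{E}[\Lambda^2\Theta\,\mathbb{I}\{L=\ell\}]=\tilde\zeta(\ell)\,\mathbb{E}[\Lambda^2\,\mathbb{I}\{L=\ell\}]$, i.e.\ $\tilde\zeta(\ell)=\mathbb{E}[\Lambda^2\Theta\mid L=\ell]/\mathbb{E}[\Lambda^2\mid L=\ell]$; the second derivative is $2\mathbb{E}[\Lambda^2\mathbb{I}\{L=\ell\}]\ge 0$, confirming a minimum. This is the abstract form of the claimed result.

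The remaining work for Lemma \ref{lem.1} is to translate the conditional expectations into the explicit integral-sum representation. Here I would condition further on the risk class $\boldsymbol{X}=\boldsymbol{x}_k$ (which carries weight $w_k$ and fixes $\Lambda=\lambda_k$) and on $\Theta=\theta$ (with density $g$), and use that, given these, the event $\{L=\ell\}$ has probability $\pi_\ell(\lambda_k\theta,\psi)$ by the stationarity relation \eqref{statProb}. Thus $\mathbb{E}[\Lambda^2\Theta\,\mathbb{I}\{L=\ell\}]=\sum_{k}w_k\lambda_k^2\int\theta\,\pi_\ell(\lambda_k\theta,\psi)g(\theta)\,\mathrm{d}\theta$ and similarly without the factor $\theta$ in the denominator; dividing yields exactly the displayed formula. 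For Lemma \ref{lem.2} the argument is structurally identical: the objective \eqref{opt2} again separates over $\{L=\ell\}$, giving $\tilde\zeta(\ell)=\mathbb{E}[(\Lambda^{[1]}\Lambda^{[2]})^2\Theta^{[1]}\Theta^{[2]}\mid L=\ell]/\mathbb{E}[(\Lambda^{[1]}\Lambda^{[2]})^2\mid L=\ell]$, and then I would expand by conditioning on the risk class and on the bivariate random effect $(\Theta^{[1]},\Theta^{[2]})$ with joint density $h$.

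The one genuinely delicate point, and the place I expect a careful reader to pause, is the asymmetry between numerator and denominator in Lemma \ref{lem.2}: the numerator integrand involves the joint density $h(\theta^{[1]},\theta^{[2]})$ while the denominator integrand involves only the marginal $g_1(\theta^{[1]})$. This is legitimate because the BM level $L$ is driven by the frequency process alone, so $\{L=\ell\}$ depends on $(\Lambda^{[1]},\Lambda^{[2]},\Theta^{[1]},\Theta^{[2]})$ only through $\lambda^{[1]}_k$ and $\theta^{[1]}$ (via $\pi_\ell(\lambda_k^{[1]}\theta^{[1]},\psi^{[1]})$); hence in the denominator the $\theta^{[2]}$-integration of $h$ collapses $h$ to its marginal $g_1$, whereas in the numerator the factor $\theta^{[2]}$ prevents that collapse and the full copula-based joint density must be retained. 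I would make this reduction explicit — integrating out $\theta^{[2]}$ in the denominator and invoking that $\Lambda^{[2]}$ and $\Theta^{[2]}$ contribute only through the constant $(\lambda_k^{[2]})^2$ there — since it is precisely the mechanism through which frequency--severity dependence enters the optimal relativities and distinguishes Lemma \ref{lem.2} from a naive product of two one-dimensional problems. The rest is routine bookkeeping, including the observation that both expressions are finite under the moment assumptions implicit in the exponential dispersion family setup referenced in the model descriptions.
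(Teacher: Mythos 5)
Your argument is correct and is essentially the same as the one the paper itself uses (the paper cites Lemma \ref{lem.2} from \citet{PengAhn} without proof, but its proof of the generalization in Proposition \ref{thm.oh.1} proceeds by exactly your route: decompose the objective over the events $\{L=\ell\}$, solve the resulting scalar quadratic problems via the first-order condition, and then expand the conditional expectations by conditioning on the risk class and the random effects, using that the stationary level probability given $(\Lambda,\Theta)$ is $\pi_\ell(\lambda_k\theta,\psi)$). Your explicit treatment of the numerator/denominator asymmetry --- that $\{L=\ell\}$ depends on the severity components only trivially, so the $\theta^{[2]}$-integration collapses $h$ to $g_1$ in the denominator but not in the numerator --- is the right justification and matches the displayed formulas.
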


The optimal relativities in Lemma \ref{lem.1} and \ref{lem.2} can be applied to
various transition rules and BM levels $\ell$. For the calculation of optimal relativity in the BMS literature,  the Markov property of the transition rules is typically utilized, and the BM level is often assumed to be stationary \citep{Pitrebois2003, Denuit2, Chong, PengAhn}. The most common (and also one of the simplest) BMS is the $-1/+h$ system, while other variations are also available, for example, in \cite{Chong}. However, many countries including Belgium, Korea, and Singapore adapt transition rules with long memory. While such transition rules do not directly possess Markov property, there are examples in the literature showing how these can be dealt with \citep{L1995, Pitrebois2003}. Key technique involves augmentation of BM levels, and the following section generalizes these examples.

We remark that the optimization in \eqref{eq.ahn13} or \eqref{opt2} is not the only optimization criterion that can be used. There are various criteria to define the optimality of BM relativities, e.g. \cite{Denuit2, Chong}. More importantly, BM relativities obtained from \eqref{eq.ahn13}, \eqref{opt2}, and \cite{Denuit2, Chong} are known to create some systematic bias in the prediction of premium. Such systematic bias is called the double-counting problem, and we refer interested readers to \cite{L1995, Taylor1997, Ohdouble} for the details of the double-counting problem and its solution.

\section{A modified BMS with augmented BM levels}\label{mBMS}

In this section, the number of consecutive claim-free years for a policyholder is taken into account in the transition rules. Assuming there are $z+1$ BM levels, we let $L_{t}\in\{0, 1, \ldots, z\}$ be the BM level at time $t\in\{0,1,\ldots\}$ so that $L_{t}$ is the BM level applicable for the $(t+1)$-th policy year, that is, from time $t$ to time $t+1$. Similar to previous notation, for $t\in\{1,2,\ldots\}$ we use $N_{t}$ to denote the number of claims in the $t$-th policy year.

\subsection{Modified transition rules of $-1/+h/pen$}

While the classical BMS usually adopts the so-called $-1/+h$ system for $h\le z$ as described in Section \ref{OR}, we shall introduce an additional component, namely ``$pen$'', to this system. Specifically, under such a $-1/+h/pen$ system, each reported claim increases the BM level by $h$ while $1+pen^*_t$ consecutive claim-free years will result in decrease of one BM level at time $t+1$ when moving from the $(t+1)$-th policy year to the $(t+2)$-th. Here $pen_t^*$ is defined as
\begin{equation}\label{pstar}
pen_t^*:=\min\{pen, t\},
\end{equation}
for $t\in\{0,1,\ldots\}$. This implies that the BM level at time $t=0$ starts with no penalty, and indeed a new policyholder can have his/her BM level reduced by one every year as long as he/she maintains a no-claim record. (This is also equivalent to letting $N_0=N_{-1}=\ldots=N_{-pen}=0$ and saying that one requires $1+pen$ consecutive claim-free years to decrease BM level by one.) Note that the classical BMS is retrieved from our extended model by letting $pen=0$. We further assume that a new policyholder without any driving history belongs to the BM level $\ell_0$ in the beginning. Then, $L_{t}$ is mathematically represented as
\begin{equation}\label{eq.8}
L_t:=
\begin{cases}
  \min\{L_{t-1}+h N_{t}, z\}, & N_{t}>0,\\
  \max\{L_{t-1}-1, 0\}, &
  N_{t}= \ldots =N_{\max\{t-pen^*_{t-1}, 1\}} =0,\\
  L_{t-1}, &\hbox{otherwise},
\end{cases}
\end{equation}
for $t\in\{1,2,\ldots\}$, with $L_0=\ell_0$. It is also assumed that the BM relativity ${\zeta}(\ell)$ is applied in the same manner as in the classical $-1/+h$ system. That is, each BM level $\ell$ is bestowed with BM relativity ${\zeta}(\ell)$. Compared to the classical $-1/+h$ system, an additional $pen^*_t$ claim-free years are required to reduce BM level by one, whereas the penalty of climbing $h$ BM levels per claim remains the same. Such a system is introduced to motivate policyholders to drive carefully by applying stricter rule of lowering the BM level (reward case) while keeping the same rule of increasing the BM level (penalty case). This kind of the transition rules are adopted in some countries such as Singapore and Korea.

While the Markov property in the transition probability appearing in certain BMS is convenient for the analysis of BMS-related problems (e.g. \cite{Taylor1997, Pitrebois2003, Denuit2}), it is obvious that with the current definition of $L_t$, the transition rules are not Markovian as $L_t$ depends on not only $L_{t-1}$ but also the numbers of claims in the past multiple periods. To resolve this issue, as explained in \cite{L1995} and \cite{Pitrebois2003}, fictitious levels can be included to redefine the state space of the BM levels. Subdividing some of the BM levels in order to include the information of the number of claim-free years results in an increase of the number of states compared to the classical case. More precisely, for $\ell=h, \ldots, z$, the BM level $\ell$ is augmented into
\begin{equation*}
(\ell)_0, (\ell)_1, \ldots, (\ell)_{pen},
\end{equation*}
where the subscript stands for the number of additional claim-free periods (compared to the classical BMS) required to get rewarded, while $\ell\in \{0, \ldots, h-1\}$ is just translated into $(\ell)_0$ without augmentation.\footnote{No augmentation of the BM levels $0, \ldots, h-1$ is necessary because these BM levels can only be reached as a result of reward from the previous years.}
Hence, the original $(z+1)$ BM levels in the $-1/+h/pen$ system have been augmented to $[h+(z-h+1)\times (pen+1)]$ BM levels. Let us denote the BM level at time $t$ under the augmented system as $L_t^*$. Then, it is convenient to define the possible combinations of $(\ell)_a$ in our proposed model (that is, the state space) as
\begin{equation}\label{newstatespace}
\mathcal{A}_{z, pen}:=\{ (\ell)_0\big\vert \ell=0, \ldots, h-1\}\cup \{ (\ell)_0, \ldots, (\ell)_{pen} \big\vert \ell=h, \ldots, z \}.
\end{equation}
Furthermore, suppose that a policyholder who was at the BM level $L_{t-1}^*=(\ell)_a\in \mathcal{A}_{z, pen}$ at time $t-1$ has reported at time $t$ that $N_{t}$ accidents occurred during the $t$-th policy year. Then the new BM level $L_{t}^*$ at time $t$ is determined as
\begin{equation}\label{eq.6}
L_t^*:=
\begin{cases}
  ( \max\{\ell-1, 0 \})_0, & \hbox{if}\quad  N_{t}=0 \,\,\hbox{and}\,\,a=0,\\
  ( \ell)_{a-1}, & \hbox{if}\quad  N_{t}=0 \,\,\hbox{and}\,\,a\neq 0,\\
  (\min\{\ell+h  N_{t}, z \})_{pen}, & \hbox{if}\quad  N_{t}>0,\\
\end{cases}
\end{equation}
for $t\in\{1,2,\ldots\}$, where a new policyholder without driving history starts with $L_0^*=(\ell_0)_0$. The transitions of $L_t^*$ are now Markovian, as $L_t^*$ depends on $L_{t-1}^*=(\ell)_a$ and the latest claim number only. Also, we assume that the BM relativity in state $(\ell)_a\in \mathcal{A}_{z, pen}$ under the augmented system defined via (\ref{eq.6}) is the same as the one in $\ell\in\{0, 1, \ldots, z\}$ for the original system defined in (\ref{eq.8}) in the sense that the relativities depend on the BM level $\ell$ but not the information $a$ which is artificially introduced to make the transitions Markovian. In other words, under the proposed model one has the relativities
\begin{equation}\label{eq.21}
\zeta^*\left( (\ell)_a \right):=\zeta(\ell),\qquad (\ell)_a\in \mathcal{A}_{z, pen}.
\end{equation}
We call the BMS with BM levels \eqref{newstatespace}, transition rules \eqref{eq.6} and relativities \eqref{eq.21} an augmented $-1/+h/pen$ system. The following example illustrates the transition rules with a certain penalty period.

\begin{example}\label{ex.1}\normalfont
Let us consider the $-1/+2/2$ system with $z=20$. That is, the BM level is to be increased by two per claim while having $1+pen^*_t$ consecutive claim-free years is to be rewarded by a decrease of one BM level. We assume that a new policyholder without driving history starts at BM level $L_0=10$ (so $\ell=10$ at time 0). Given the BM level $L_{t}$ at time $t$ and the claim frequency $N_{t+1}$ in the $(t+1)$-th year, the next year's BM level $L_{t+1}$ according to (\ref{eq.8}) and alternatively $L^\ast_{t+1}$ according to (\ref{eq.6}) are summarized in Table \ref{tb.1}. (Note that, with $pen=2$, \eqref{pstar} implies that $pen_t^*$ stays level at 2 once $t$ reaches 2.) The first few transitions are explained as follows:\vspace{-.05in}
  \begin{enumerate}
    \item[i.] Transition from $t=0$ to $t=1$:\\
     At time 0 one has $pen^*_0=0$ from (\ref{pstar}), and therefore $1+pen^*_0=1$ claim-free year is required to reduce the BM level by one at time 1 (like the classical BMS) and so we set $a=0$ at time 0. As it turns out that there is no claim reported at $t=1$ (i.e. $N_1=0$), the BM level is reduced from $L_0=10$ to $L_1=9$ (and thus $\ell=9$ at $t=1$). Now $pen^*_1=1$ according to (\ref{pstar}), meaning that a total of $1+pen^*_1=2$ consecutive claim-free years are needed to reduce one BM level next year at time 2. Since there has already been one claim-free year, only one more claim-free year is needed to earn such a reward (again like the classical BMS) and therefore one sets $a=0$ at time 1.\vspace{-.05in}
    \item[ii.] Transition from $t=1$ to $t=2$:\\
   Since $N_2=0$ again, the condition of reward is satisfied and the BM level moves down to $L_2=8$ (and $\ell=8$) at time $2$. With $pen^*_2=2$ at time 2, it is known that $1+pen^*_t=3$ consecutive claim-free years are needed to lower BM level by one. With two consecutive claim-free years already in hand, we need only one more claim-free year to reduce BM level in the next year and therefore $a=0$ at time 2.\vspace{-.05in}
   \item[iii.] Transition from $t=2$ to $t=3$:\\
     Since $N_3=2$ in the third year, the BM level moves up to 12 ($=8+2\times 2$) at time 3 (and so $\ell=12$). As $pen^*_3=2$ at time 3, one needs $1+pen^*_3=3$ consecutive claim-free years to reduce BM level by one, which means two claim-free years in additional to the one claim-free year required in the classical BMS. Consequently, we have $a=2$ at $t=3$.
     \vspace{-.05in}
     \item[iv.] Transition from $t=3$ to $t=4$:\\
     Because $N_4=1$ in the forth year, the BM level increases to $L_4=14$ ($=12+2\times 1$) at time 4.  As there is a claim in this period, three consecutive claim-free years are still needed to reduce BM level by one. So, we have $a=2$ at $t=4$.
          \vspace{-.05in}
     \item[v.] Transition from $t=4$ to $t=5$:\\
     Although there is no claim in the fifth year as $N_5=0$, the condition of having three consecutive claim-free years in order to enjoy a reward is not satisfied, and therefore the BM level stays at the same level as last year so that $L_5=14$. Then two more consecutive claim-free years are required to lower the BM level by one. This is one additional year compared to the classical $-1/+2$ system so $a=1$ at $t=5$.
  \end{enumerate}

  \begin{table}[h]
  \begin{center}
  \begin{tabular}{|c|c|c|c|c|c|c|c|c|c|c|}
    \hline
    $t$ & 0 & 1 & 2 & 3 & 4 & 5 & 6 & 7 & 8 & 9 \\
    \hline\hline
    $N_{t+1}$ & 0 & 0 & 2 & 1 & 0 & 0 & 0 & 0 & 1 &  \\
    $L_t$ & 10 & 9 & 8 & 12 & 14 & 14 & 14 & 13 & 12 & 14 \\
    $pen_t^*$ & 0 & 1 & 2 & 2 & 2 & 2 & 2& 2 & 2 & 2 \\
    $L_t^*$ & $(10)_0$ & $(9)_0$ & $(8)_0$ & $(12)_2$ & $(14)_2$ & $(14)_1$ & $(14)_0$ & $(13)_0$ & $(12)_0$ & $(14)_2$ \\
    \hline
  \end{tabular}
  \caption{Transition rules under a $-1/+2/2$ system}\label{tb.1}
  \end{center}
  \end{table}
\end{example}\vspace{-.3in}



\begin{example}\normalfont\label{ex.2}

Suppose that we have the $-1/+2/1$ system with $z=7$. We consider Model \ref{mod.1} where a policyholder, say the $i$-th policyholder, has observed and unobserved risk characteristics $\boldsymbol{X}_i=\boldsymbol{x}_i$ and ${\Theta}_i={\theta}_i$ respectively. Then, based on the augmented form of BM levels defined in \eqref{newstatespace}, one finds a total of $(2+(7-2+1)\times 2)=14$ BM levels with the transition matrix $\boldsymbol{P}$ given by

{\begin{footnotesize}
\[
\begin{blockarray}{*{14}{c} l}
&(0)_0 & (1)_0 & (2)_0 & (2)_1 & (3)_0 & (3)_1& (4)_0 & (4)_1& (5)_0 & (5)_1& (6)_0 & (6)_1& (7)_0 & (7)_1\\
\begin{block}{c(*{13}{c} l)}
(0)_0&  p_0& 0  & 0  & p_1& 0  & 0  & 0  & p_2& 0  & 0  & 0  & p_3& 0  & 1-p_0-p_1-p_2-p_3 \\
(1)_0&  p_0& 0  & 0  & 0  & 0  & p_1& 0  & 0  & 0  & p_2& 0  & 0  & 0  & 1-p_0-p_1-p_2\\
(2)_0&  0  & p_0& 0  & 0  & 0  & 0  & 0  & p_1& 0  & 0  & 0  & p_2& 0  & 1-p_0-p_1-p_2\\
(2)_1&  0  & 0  & p_0& 0  & 0  & 0  & 0  & p_1& 0  & 0  & 0  & p_2& 0  & 1-p_0-p_1-p_2\\
(3)_0&  0  & 0  & p_0& 0  & 0  & 0  & 0  & 0  & 0  & p_1& 0  & 0  & 0  & 1-p_0-p_1\\
(3)_1&  0  & 0  & 0  & 0  & p_0& 0  & 0  & 0  & 0  & p_1& 0  & 0  & 0  & 1-p_0-p_1\\
(4)_0&  0  & 0  & 0  & 0  & p_0& 0  & 0  & 0  & 0  & 0  & 0  & p_1& 0  & 1-p_0-p_1\\
(4)_1&  0  & 0  & 0  & 0  & 0  & 0  & p_0& 0  & 0  & 0  & 0  & p_1& 0  & 1-p_0-p_1\\
(5)_0&  0  & 0  & 0  & 0  & 0  & 0  & p_0& 0  & 0  & 0  & 0  & 0  & 0  & 1-p_0\\
(5)_1&  0  & 0  & 0  & 0  & 0  & 0  & 0  & 0  & p_0& 0  & 0  & 0  & 0  & 1-p_0\\
(6)_0&  0  & 0  & 0  & 0  & 0  & 0  & 0  & 0  & p_0& 0  & 0  & 0  & 0  & 1-p_0\\
(6)_1&  0  & 0  & 0  & 0  & 0  & 0  & 0  & 0  & 0  & 0  & p_0& 0  & 0  & 1-p_0\\
(7)_0&  0  & 0  & 0  & 0  & 0  & 0  & 0  & 0  & 0  & 0  & p_0& 0  & 0  & 1-p_0\\
(7)_1&  0  & 0  & 0  & 0  & 0  & 0  & 0  & 0  & 0  & 0  & 0  & 0  & p_0& 1-p_0\\
\end{block}
\end{blockarray}
 \]
 \end{footnotesize} }
where $p_n$ in the matrix is defined by (see \eqref{eq.402})
\[
p_n:=\mathbb{P}(N_{it}=n|\Theta_i=\theta_i, \boldsymbol{X}_i=\boldsymbol{x}_i), \qquad n=0,1,\ldots.
\]
For simplicity, the dependence of $p_n$ on $\theta_i$ and $\boldsymbol{x}_i$ is suppressed. Similarly, if we assume Model \ref{mod.2} instead and the $i$-th policyholder has observed and unobserved risk characteristics given by $(\boldsymbol{X}^{[1]}_i, \boldsymbol{X}^{[2]}_i)=(\boldsymbol{x}^{[1]}_i, \boldsymbol{x}^{[2]}_i)$ and $(\Theta^{[1]}_i, \Theta^{[2]}_i)=(\theta^{[1]}_i, \theta^{[2]}_i)$, then $p_n$ in the matrix $\boldsymbol{P}$ is (see \eqref{eq.n})
\[
p_n:=\mathbb{P}(N_{it}=n|\Theta^{[1]}_i=\theta^{[1]}_i, \boldsymbol{X}^{[1]}_i=\boldsymbol{x}^{[1]}_i), \qquad n=0,1,\ldots.
\]
%
%
\end{example}

\subsection{Optimal relativities}\label{SEC32} 


In this section, under a $-1/+h/pen$ system, the optimal values of the relativities $\zeta(\ell)$ for $\ell=0,1,\ldots,z$ are studied. Following the classical BMS literature, we denote the stationary distribution of $L_t^*$ by $L^*$ for a randomly picked policyholder. Then, under the representation in Model \ref{mod.1}, one finds in an analogous manner to \eqref{statProb} that
\begin{equation}\label{eq.ahn5}
\mathbb{P}(L^*=(\ell)_a)
=\sum_{k=1}^\mathcal{K} w_{k} \int \pi_{(\ell)_a}^*(\lambda_k\theta, \psi) g(\theta){\rm d}\theta,\qquad (\ell)_a\in \mathcal{A}_{z, pen},
\end{equation}
where $\pi_{(\ell)_a}^*(\lambda_k\theta, \psi)$ is the stationary distribution for a policyholder to be in BM level $(\ell)_a$ given that his/her a priori claim frequency is $\lambda_k$ and unobserved risk characteristics are summarized in $\theta$. (Again, $\lambda_k$, $g(\cdot)$ and $\psi$ are replaced by $\lambda^{[1]}_k$, $g_1(\cdot)$ and $\psi^{[1]}$ respectively if we consider Model 2 instead.) For example, the 14-dimensional row vector of stationary probabilities corresponding to Example \ref{ex.2}, namely
\[
\boldsymbol{\pi}^*(\lambda_{k}\theta,\psi):=(\pi_{(0)_0}^*(\lambda_{k}\theta,\psi), \pi_{(1)_0}^*(\lambda_{k}\theta,\psi),\ldots, \pi_{(7)_1}^*(\lambda_{k}\theta,\psi)),
\]
can be obtained as the solution of
\begin{equation*}
\begin{cases}
  \boldsymbol{\pi}^*(\lambda_{k}\theta,\psi) = \boldsymbol{\pi}^*(\lambda_{k}\theta,\psi)\boldsymbol{P},\\
  \boldsymbol{\pi}^*(\lambda_{k}\theta,\psi)\boldsymbol{e}_{14}=1,\\
\end{cases}
\end{equation*}
where $\boldsymbol{P}$ is the transition matrix in Example \ref{ex.2} (calculated with $\boldsymbol{x}_{k}$ and $\theta$ in place of $\boldsymbol{x}_{i}$ and $\theta_i$ respectively), and $\boldsymbol{e}_{14}$ is a 14-dimensional column vector of ones.


Under the augmented system, we consider
the optimal relativities as the solution of the optimization problem
 \begin{equation}\label{eq.406}
(\tilde{\zeta}(0), \ldots, \tilde{\zeta}(z)):=\argmin_{({\zeta}(0), \ldots, {\zeta}(z))\in \Real^{{z+1}}}\mathbb{E}[(\Lambda\Theta-\Lambda {\zeta}^*(L^*))^2]
 \end{equation}
resembling \eqref{eq.ahn13} where we aim to predict the frequency under the frequency-only Model \ref{mod.1}. 
On the other hand, when we are interested in the prediction of the aggregate claim, like  \eqref{opt2} the optimal relativities are given by the solution of the optimization problem
 \begin{equation}\label{eq.4066}
(\tilde{\zeta}(0), \ldots, \tilde{\zeta}(z)):=\argmin_{({\zeta}(0), \ldots, {\zeta}(z))\in \Real^{{z+1}}} \mathbb{E}[(\Lambda^{[1]}\Lambda^{[2]}\Theta^{[1]}\Theta^{[2]}-\Lambda^{[1]}\Lambda^{[2]} {\zeta}^*(L^*))^2]
 \end{equation}
under the frequency-severity Model \ref{mod.2}. 
Recall that, by definition, $\zeta^*:\mathcal{A}_{z, pen}\mapsto \Real $ in \eqref{eq.406} or \eqref{eq.4066} is completely characterized by $\zeta(\ell)$ for $\ell=0,\ldots,z$ (see \eqref{eq.21}).



\begin{proposition}\label{thm.oh.1}\normalfont
Consider the $-1/+h/pen$ system under Model \ref{mod.1}. The optimal relativities 
defined as the solution of the optimization problem \eqref{eq.406} are given by
\begin{equation}\label{eq.409}
\tilde{\zeta}(\ell):=\frac{\mathbb{E}[\Lambda^2  \Theta | L^*=(\ell)_0]}
{\mathbb{E}[\Lambda^2 | L^*=(\ell)_0]}, \qquad \ell= 0, \ldots,{h-1},
\end{equation}
and
  \begin{equation}\label{eq.407}
\tilde{\zeta}(\ell):=\frac{\sum_{a=0}^{pen}\mathbb{E}[\Lambda^2  \Theta|  L^*=(\ell)_a]\mathbb{P}(L^*=(\ell)_a)}
{\sum_{a=0}^{pen}\mathbb{E}[\Lambda^2 | L^*=(\ell)_a]\mathbb{P}(L^*=(\ell)_a)},
\qquad \ell=h, \ldots,{z}; a=0, \ldots, pen,
\end{equation}
where the numerator and denominator can be calculated based on
\begin{equation}\label{numeratorProp1part1}
\mathbb{E}[\Lambda^2 \Theta | L^*=(\ell)_a]
=\frac{\sum_{k=1}^\mathcal{K} w_{k} \lambda_{k}^2  \int \theta  \pi_{(\ell)_a}^*(\lambda_{k}\theta, \psi) g(\theta){\rm d}\theta}
{\mathbb{P}(L^*=(\ell)_a)}
\end{equation}
and
\begin{equation}\label{eq.4077}
\mathbb{E}[\Lambda^2 | L^*=(\ell)_a] = \frac{\sum_{k=1}^\mathcal{K} w_{k} \lambda_{k}^2  \int \pi_{(\ell)_a}^*(\lambda_{k}\theta, \psi)g(\theta){\rm d}\theta}
{\mathbb{P}(L^*=(\ell)_a)},
\end{equation}
respectively. Note that when $pen=0$ the optimal relativity $\tilde{\zeta}(\ell)$ reduces to that in \citet{Chong} given in Lemma \ref{lem.1}.
\end{proposition}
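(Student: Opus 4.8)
The plan is to use the relativity constraint \eqref{eq.21}, which turns the $(z+1)$-dimensional least-squares problem \eqref{eq.406} into $z+1$ decoupled one-dimensional quadratic minimizations. First I would expand the objective by partitioning over the augmented state space: since $L^*$ takes values in $\mathcal{A}_{z,pen}$ and $\zeta^*((\ell)_a)=\zeta(\ell)$ for every $a$,
\begin{equation*}
\mathbb{E}\big[(\Lambda\Theta-\Lambda\zeta^*(L^*))^2\big]
=\sum_{\ell=0}^{h-1}\mathbb{E}\big[(\Lambda\Theta-\Lambda\zeta(\ell))^2\,\mathbb{I}\{L^*=(\ell)_0\}\big]
+\sum_{\ell=h}^{z}\sum_{a=0}^{pen}\mathbb{E}\big[(\Lambda\Theta-\Lambda\zeta(\ell))^2\,\mathbb{I}\{L^*=(\ell)_a\}\big],
\end{equation*}
where I have split $\mathcal{A}_{z,pen}$ according to \eqref{newstatespace}. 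The crucial observation is that each unknown $\zeta(\ell)$ appears in exactly one of these summands (and each summand is nonnegative), so minimizing the whole expression amounts to minimizing each summand separately in $\zeta(\ell)$.

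Next I would solve each scalar problem by a first-order condition. For $\ell\le h-1$, differentiating $c\mapsto\mathbb{E}[(\Lambda\Theta-\Lambda c)^2\,\mathbb{I}\{L^*=(\ell)_0\}]$ in $c$ and setting the derivative to zero gives $\mathbb{E}[\Lambda^2\Theta\,\mathbb{I}\{L^*=(\ell)_0\}]=c\,\mathbb{E}[\Lambda^2\,\mathbb{I}\{L^*=(\ell)_0\}]$, i.e.\ \eqref{eq.409}; the second derivative $2\,\mathbb{E}[\Lambda^2\,\mathbb{I}\{L^*=(\ell)_0\}]$ is positive, so this is indeed the minimizer. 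For $\ell\ge h$ the same computation applied to $c\mapsto\sum_{a=0}^{pen}\mathbb{E}[(\Lambda\Theta-\Lambda c)^2\,\mathbb{I}\{L^*=(\ell)_a\}]$ yields $\sum_{a=0}^{pen}\mathbb{E}[\Lambda^2\Theta\,\mathbb{I}\{L^*=(\ell)_a\}]=c\sum_{a=0}^{pen}\mathbb{E}[\Lambda^2\,\mathbb{I}\{L^*=(\ell)_a\}]$, and rewriting $\mathbb{E}[\Lambda^2\Theta\,\mathbb{I}\{L^*=(\ell)_a\}]=\mathbb{E}[\Lambda^2\Theta\mid L^*=(\ell)_a]\,\mathbb{P}(L^*=(\ell)_a)$ (and analogously without $\Theta$) gives \eqref{eq.407}.

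I would then identify the unconditional expectations with the stationary-probability expressions by conditioning on the a priori class and on $\Theta$: conditioning on $\boldsymbol{X}=\boldsymbol{x}_k$ (probability $w_k$) fixes $\Lambda=\lambda_k$, and given $(\boldsymbol{X},\Theta)=(\boldsymbol{x}_k,\theta)$ the stationary chance of occupying state $(\ell)_a$ is $\pi_{(\ell)_a}^*(\lambda_k\theta,\psi)$, whence $\mathbb{E}[\Lambda^2\Theta\,\mathbb{I}\{L^*=(\ell)_a\}]=\sum_{k=1}^{\mathcal{K}}w_k\lambda_k^2\int\theta\,\pi_{(\ell)_a}^*(\lambda_k\theta,\psi)g(\theta)\,\mathrm{d}\theta$ and similarly for the denominator; dividing by $\mathbb{P}(L^*=(\ell)_a)$ as given in \eqref{eq.ahn5} produces \eqref{numeratorProp1part1} and \eqref{eq.4077}. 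The $pen=0$ reduction is then immediate: $\mathcal{A}_{z,0}=\{(\ell)_0:\ell=0,\ldots,z\}$, the sums over $a$ collapse to the single term $a=0$, and $\pi_{(\ell)_0}^*$ coincides with the classical stationary probability $\pi_\ell$, recovering Lemma \ref{lem.1}.

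The computations are routine once the decoupling is in place; the one point that needs care — and the only genuine subtlety — is the effect of the constraint \eqref{eq.21}: one may \emph{not} set $\zeta^*((\ell)_a)$ equal to the conditional ratio $\mathbb{E}[\Lambda^2\Theta\mid L^*=(\ell)_a]/\mathbb{E}[\Lambda^2\mid L^*=(\ell)_a]$ state by state, since that would destroy $a$-invariance; instead the common value over $a=0,\ldots,pen$ must be the $\mathbb{P}(L^*=(\ell)_a)$-weighted combination in \eqref{eq.407}, and checking that this is precisely what the aggregated first-order condition delivers is the heart of the argument. I would also record the mild standing assumption that every BM level is visited with positive probability in the stationary regime, so that the denominators above do not vanish.
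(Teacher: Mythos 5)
Your proposal is correct and follows essentially the same route as the paper's proof: the same decomposition of the objective over the augmented state space $\mathcal{A}_{z,pen}$ (you write it with indicators where the paper uses conditional expectations weighted by $\mathbb{P}(L^*=(\ell)_a)$, which is the same thing), the same decoupled first-order conditions in each $\zeta(\ell)$, and the same conditioning argument exploiting the independence of $\boldsymbol{X}$ and $\Theta$ to arrive at \eqref{numeratorProp1part1} and \eqref{eq.4077}. Your explicit second-order check and the remark about non-vanishing stationary probabilities are minor additions the paper leaves implicit.
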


\begin{proof}
Note that the objective function in \eqref{eq.406} can be written as
\begin{align}\label{prop1proofstep1}
\mathbb{E}[(\Lambda\Theta-\Lambda {\zeta}^*(L^*))^2]=&\sum_{(\ell)_a\in \mathcal{A}_{z, pen}}\mathbb{E}[(\Lambda\Theta-\Lambda {\zeta}^*((\ell)_a))^2|L^*=(\ell)_a]\mathbb{P}(L^*=(\ell)_a)\nonumber\\
=&\sum_{(\ell)_a\in \mathcal{A}_{z, pen}}\mathbb{E}[(\Lambda\Theta-\Lambda {\zeta}(\ell))^2|L^*=(\ell)_a]\mathbb{P}(L^*=(\ell)_a),
\end{align}
where the last line follows from \eqref{eq.21}. We observe from \eqref{newstatespace} that:\vspace{-.05in}
\begin{itemize}
  \item for each $\ell=0, \ldots, h-1$, the relativity $\zeta(\ell)$ is shared by only one BM level $(\ell)_0$; and\vspace{-.05in}
  \item for each $\ell=h, \ldots,z$, the same relativity $\zeta(\ell)$ is shared by the augmented BM levels $(\ell)_0, \ldots, (\ell)_{pen}$.
\end{itemize}\vspace{-.05in}
Consequently, \eqref{prop1proofstep1} becomes
\begin{align}\label{prop1proofstep2}
\mathbb{E}[(\Lambda\Theta-\Lambda {\zeta}^*(L^*))^2] =&~\sum_{\ell=0}^{h-1}\mathbb{E}[(\Lambda\Theta-\Lambda {\zeta}(\ell))^2|L^*=(\ell)_0]\mathbb{P}(L^*=(\ell)_0)\nonumber\\
&+\sum_{\ell=h}^z\sum_{a=0}^{pen}\mathbb{E}[(\Lambda\Theta-\Lambda {\zeta}(\ell))^2|L^*=(\ell)_a]\mathbb{P}(L^*=(\ell)_a).
\end{align}
For each fixed $\ell=0, \ldots, h-1$, differentiation of \eqref{prop1proofstep2} with respect to ${\zeta}(\ell)$ for optimization yields
\begin{equation*}
\mathbb{E}[-2\Lambda(\Lambda\Theta-\Lambda\zeta(\ell))|L^*=(\ell)_0]\mathbb{P}(L^*=(\ell)_0)=0,
\end{equation*}
from which \eqref{eq.409} follows. On the other hand, when $\ell=h, \ldots,{z}$, taking derivative with respect to ${\zeta}(\ell)$ in \eqref{prop1proofstep2} leads to
\begin{equation*}
\sum_{a=0}^{pen}\mathbb{E}[-2\Lambda(\Lambda\Theta-\Lambda\zeta(\ell))|L^*=(\ell)_a]\mathbb{P}(L^*=(\ell)_a)=0,
\end{equation*}
proving \eqref{eq.407}.



Here, the numerators of \eqref{eq.409} and \eqref{eq.407} can be calculated using
\begin{align}\label{numeratorProp1}
\mathbb{E}[\Lambda^2\Theta| L^*=(\ell)_a] =&~\frac{1}{\mathbb{P}(L^*=(\ell)_a)} \mathbb{E}[\Lambda^2\Theta I(L^*=(\ell)_a)] \nonumber\\
=&~\frac{1}{\mathbb{P}(L^*=(\ell)_a)} \sum_{k=1}^\mathcal{K}\int \lambda_k^2\theta\mathbb{P}(\Theta\in {\rm d}\theta, \Lambda=\lambda_k,L^*=(\ell)_a)\nonumber\\
=&~\frac{1}{\mathbb{P}(L^*=(\ell)_a)} \sum_{k=1}^\mathcal{K}\int \lambda_k^2\theta\mathbb{P}(L^*=(\ell)_a|\Theta\in {\rm d}\theta, \Lambda=\lambda_k) \mathbb{P}(\Theta\in {\rm d}\theta, \Lambda=\lambda_k).
\end{align}
Since the a priori  $\boldsymbol{X}$ and the a posteriori $\Theta$ are independent and $\Lambda=\eta^{-1} (\boldsymbol{X}\boldsymbol{\beta})$, one has that $\mathbb{P}(\Theta\in {\rm d}\theta, \Lambda=\lambda_k)=\mathbb{P}(\Lambda=\lambda_k)\mathbb{P}(\Theta\in {\rm d}\theta)=w_kg(\theta){\rm d}\theta$. Using this together with the fact that $\mathbb{P}(L^*=(\ell)_a|\Theta\in {\rm d}\theta, \Lambda=\lambda_k)=\pi_{(\ell)_a}^*(\lambda_{k}\theta, \psi)$ confirms that \eqref{numeratorProp1} reduces to \eqref{numeratorProp1part1}. The expectation \eqref{eq.4077} can be obtained in almost identical manner and the details are omitted.
\end{proof}

\begin{proposition}\label{prop.2}\normalfont
Consider the $-1/+h/pen$ system under Model \ref{mod.2}. The optimal relativities 
defined as the solution of the optimization problem \eqref{eq.4066} are given by
  \begin{equation*}
\tilde{\zeta}(\ell):=\frac{\mathbb{E}[(\Lambda^{[1]}\Lambda^{[2]})^2  \Theta^{[1]}\Theta^{[2]}|L^*=(\ell)_0]}
{\mathbb{E}[(\Lambda^{[1]} \Lambda^{[2]})^2| L^*=(\ell)_0]},
\qquad \ell=0, \ldots,{h-1},
\end{equation*}
and
  \begin{equation*}
\tilde{\zeta}(\ell):=\frac{\sum_{a=0}^{pen}\mathbb{E}[( \Lambda^{[1]}\Lambda^{[2]})^2  \Theta^{[1]}\Theta^{[2]}| L^*=(\ell)_a]\mathbb{P}(L^*=(\ell)_a)}
{\sum_{a=0}^{pen}\mathbb{E}[(\Lambda^{[1]} \Lambda^{[2]})^2| L^*=(\ell)_a]\mathbb{P}(L^*=(\ell)_a)},
\qquad \ell=h, \ldots,{z}; a=0, \ldots, pen,
\end{equation*}
where the numerator and denominator can be calculated based on
\[
\mathbb{E}[( \Lambda^{[1]}\Lambda^{[2]})^2 \Theta^{[1]}\Theta^{[2]}| L^*=(\ell)_a]=\frac{\sum_{k=1}^\mathcal{K} w_{k} (\lambda_{k}^{[1]}\lambda_{k}^{[2]})^2  \int\int \theta^{[1]}\theta^{[2]}  \pi_{(\ell)_a}^*(\lambda_{k}^{[1]}\theta^{[1]},\psi^{[1]}) h(\theta^{[1]}, \theta^{[2]}){\rm d}\theta^{[1]}{\rm d} \theta^{[2]}} {\mathbb{P}(L^*=(\ell)_a)}
\]
and
\[
\mathbb{E}[( \Lambda^{[1]}\Lambda^{[2]})^2| L^*=(\ell)_a] = \frac{\sum_{k=1}^\mathcal{K} w_{k} (\lambda_{k}^{[1]}\lambda_{k}^{[2]})^2  \int \pi_{(\ell)_a}^*(\lambda_{k}^{[1]}\theta^{[1]},\psi^{[1]})g_1(\theta^{[1]}){\rm d}\theta^{[1]}} {\mathbb{P}(L^*=(\ell)_a)},
\]
respectively. Note that when $pen=0$ the optimal relativity $\tilde{\zeta}(\ell)$ reduces to that in \citet{PengAhn} given in Lemma \ref{lem.2}.
\end{proposition}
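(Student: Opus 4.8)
The plan is to follow the proof of Proposition~\ref{thm.oh.1} step for step, with the frequency target $\Lambda\Theta$ replaced by the aggregate-claim target $\Lambda^{[1]}\Lambda^{[2]}\Theta^{[1]}\Theta^{[2]}$, the charged premium $\Lambda\zeta^*(L^*)$ replaced by $\Lambda^{[1]}\Lambda^{[2]}\zeta^*(L^*)$, and everything computed under Model~\ref{mod.2}. First I would condition on the augmented level $L^*$ and expand the objective in \eqref{eq.4066} as a sum over $(\ell)_a\in\mathcal{A}_{z,pen}$ of conditional second moments weighted by $\mathbb{P}(L^*=(\ell)_a)$, using $\zeta^*((\ell)_a)=\zeta(\ell)$ from \eqref{eq.21}. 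As in Proposition~\ref{thm.oh.1}, the state-space description \eqref{newstatespace} shows that $\zeta(\ell)$ is attached to the single state $(\ell)_0$ when $\ell=0,\dots,h-1$ and to all of $(\ell)_0,\dots,(\ell)_{pen}$ when $\ell=h,\dots,z$, so the objective decomposes into a sum of convex quadratics, one in each coordinate $\zeta(\ell)$.

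Coordinatewise minimization (setting $\partial/\partial\zeta(\ell)=0$) then produces, after cancelling the common nonzero factor $\Lambda^{[1]}\Lambda^{[2]}$ inside the expectations, $\tilde\zeta(\ell)=\mathbb{E}[(\Lambda^{[1]}\Lambda^{[2]})^2\Theta^{[1]}\Theta^{[2]}\mid L^*=(\ell)_0]/\mathbb{E}[(\Lambda^{[1]}\Lambda^{[2]})^2\mid L^*=(\ell)_0]$ for $\ell<h$, and the $pen$-averaged ratio for $\ell\ge h$; convexity guarantees these stationary points are the global minimizers, and positivity of $\mathbb{E}[(\Lambda^{[1]}\Lambda^{[2]})^2\mid L^*=(\ell)_a]$ makes the denominators well defined.

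Next I would evaluate the conditional expectations. Writing $\mathbb{E}[\,\cdot\mid L^*=(\ell)_a]=\mathbb{E}[\,\cdot\,I(L^*=(\ell)_a)]/\mathbb{P}(L^*=(\ell)_a)$, I condition further on the risk class $k$ (which fixes $\Lambda^{[1]}=\lambda_k^{[1]}$, $\Lambda^{[2]}=\lambda_k^{[2]}$) and on $(\Theta^{[1]},\Theta^{[2]})=(\theta^{[1]},\theta^{[2]})$. The crucial point, identical in spirit to the corresponding step in Proposition~\ref{thm.oh.1}, is that under the transition rules \eqref{eq.6} the augmented level depends on the history only through the claim counts, whose conditional distribution depends only on $\lambda_k^{[1]}\theta^{[1]}$ and $\psi^{[1]}$ (see \ref{model2}); hence $\mathbb{P}(L^*=(\ell)_a\mid\text{class }k,\,\Theta^{[1]}=\theta^{[1]},\Theta^{[2]}=\theta^{[2]})=\pi_{(\ell)_a}^*(\lambda_k^{[1]}\theta^{[1]},\psi^{[1]})$ with no $\theta^{[2]}$-dependence. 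Combining this with the independence of the a priori vectors $(\boldsymbol{X}^{[1]},\boldsymbol{X}^{[2]})$ and the a posteriori vector $(\Theta^{[1]},\Theta^{[2]})$ — so that the joint law factors as $w_k\,h(\theta^{[1]},\theta^{[2]})\,\mathrm{d}\theta^{[1]}\mathrm{d}\theta^{[2]}$ — gives the double-integral-against-$h$ expression for the numerator. For the denominator the integrand carries no $\theta^{[2]}$, so integrating $\theta^{[2]}$ out replaces $h(\theta^{[1]},\theta^{[2]})$ by the marginal $g_1(\theta^{[1]})$, producing the stated formula; $\mathbb{P}(L^*=(\ell)_a)$ is supplied by the Model~\ref{mod.2} version of \eqref{eq.ahn5}. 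Putting $pen=0$ collapses $\mathcal{A}_{z,pen}$ to the classical levels and recovers Lemma~\ref{lem.2}.

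The main obstacle is bookkeeping rather than anything deep: one must keep straight that $L^*$ is driven purely by the frequency random effect $\Theta^{[1]}$, so $\Theta^{[2]}$ may be integrated against the copula-induced marginal $g_1$ in the denominator while the numerator retains the full joint density $h$. This asymmetry between $h$ and $g_1$ is the only place the bivariate-random-effect structure genuinely enters, and it is precisely where the frequency--severity dependence manifests itself in the optimal relativities.
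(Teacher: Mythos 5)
Your proposal is correct and follows exactly the route the paper intends: the paper's own proof of Proposition~\ref{prop.2} is omitted with the remark that it is ``similar to that of Proposition~\ref{thm.oh.1},'' and your argument is precisely that adaptation, including the one genuinely model-specific point — that $L^*$ is driven only by $\lambda_k^{[1]}\theta^{[1]}$ and $\psi^{[1]}$, so the numerator integrates against the joint density $h$ while the denominator collapses to the marginal $g_1$. No gaps.
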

\begin{proof}
The proof of Proposition \ref{prop.2} is similar to that of Proposition \ref{thm.oh.1}, and thus we omit the details.
\end{proof}

\subsection{Numerical Analysis}

Here a simple example is provided to examine the effect of the period of penalty on the optimal BM relativity $\tilde{\zeta}(\ell)$ and stationary probability $\mathbb{P}(L=\ell)$. First, it is convenient to introduce the relation
\begin{equation}\label{PrLl}
\mathbb{P}(L=\ell)=\sum_{\{a|(\ell)_a\in \mathcal{A}_{z, pen}\}} \mathbb{P}(L^*=(\ell)_a),
\end{equation}
which follows from the definition of $L^*$. The performances of our modified BM transition rules under a $-1/+1/pen$ system as well as a $-1/+2/pen$ system for $pen=0,1,2,3$ are compared by computing the hypothetical mean square error (HMSE). For Model 1 (the frequency random effect model) with the set of BM relativities $\boldsymbol{\zeta}=\{\zeta(\ell)\}_{\ell=0}^z$, 
the HMSE is expressed as
 \begin{align}\label{HMSEDefModel1}
 {\rm HMSE}(\boldsymbol{\zeta})
&:=\mathbb{E}[(\Lambda\Theta-\Lambda \zeta^*(L^*))^2]\nonumber\\
&=\sum_{k=1}^\mathcal{K}\sum_{(\ell)_a\in \mathcal{A}_{z, pen}}  w_k
\int  (\lambda_k \theta-\lambda_k \zeta(\ell))^2
\pi_{(\ell)_a}^*(\lambda_k\theta,\psi) g(\theta) {\rm d}\theta.
\end{align}
If $\boldsymbol{\zeta}$ is replaced by the vector of optimal relativities $\tilde{\boldsymbol{\zeta}}=\{\tilde{\zeta}(\ell)\}_{\ell=0}^z$ calculated using Proposition \ref{thm.oh.1}, then ${\rm HMSE}(\tilde{\boldsymbol{\zeta}})$ is the minimized value corresponding to the right-hand side of the optimization problem \eqref{eq.406}.

Similarly, for Model \ref{mod.2} with the set of BM relativities $\boldsymbol{\zeta}=\{\zeta(\ell)\}_{\ell=0}^z$, following the logic in \citet{PengAhn}, the HMSE is expressed as
 \begin{align*}
 {\rm HMSE}(\boldsymbol{\zeta})
&:=\mathbb{E}[(\Lambda^{[1]}\Lambda^{[2]}\Theta^{[1]}\Theta^{[2]}-\Lambda^{[1]}\Lambda^{[2]} \zeta^*(L^*))^2]\\
&=\sum_{k=1}^\mathcal{K}\sum_{(\ell)_a\in \mathcal{A}_{z, pen}}  w_k
\int \int  \big(\lambda_k^{[1]}\lambda_k^{[2]} \theta^{[1]}\theta^{[2]}-
\lambda_k^{[1]}\lambda_k^{[2]} \zeta(\ell) \big)^2
\pi_{(\ell)_a}^*(\lambda_k\theta,\psi^{[1]}) h(\theta^{[1]}, \theta^{[2]}) {\rm d}\theta^{[1]} {\rm d}\theta^{[2]}.
\end{align*}
Again, ${\rm HMSE}(\tilde{\boldsymbol{\zeta}})$ is the minimum on the right-hand side of the optimization problem \eqref{eq.4066} when $\tilde{\boldsymbol{\zeta}}=\{\tilde{\zeta}(\ell)\}_{\ell=0}^z$ calculated using Proposition \ref{prop.2}.

\begin{example}\normalfont \label{ex.3}
Under the frequency model with random effect described as Model \ref{mod.1}, it is assumed that there is only one risk class, where the frequency $N$ follows a Poisson distribution with a priori frequency $\lambda=\lambda_0$ (so that the distribution function $F$ can be put in the form of \eqref{eq.402}) and the random effect $\Theta$ in \eqref{eq.403} has a Lognormal distribution with mean 1 and $\sigma^2=0.99$, that is,
\begin{align*}
N|(\Lambda=\lambda,\Theta=\theta) &\sim {\rm Poisson} (\lambda\theta) \qquad {\rm with} \quad \lambda=\lambda_0,\\
\Theta &\sim {\rm Lognormal}(-\sigma^2/2, \sigma^2) \qquad {\rm with} \quad \sigma^2=0.99.
\end{align*}
With $z=9$ and $pen=0,1,2,3$, the stationary probabilities $\mathbb{P}(L=\ell)$ and the optimal BM relativities $\tilde{\zeta}(\ell)$ for all $\ell$'s and the resulting HMSE (under the optimal relativities) are calculated under a $-1/+1/pen$ system and a $-1/+2/pen$ system. The results when $\lambda_0=0.05$ are first summarized in Tables \ref{table.1}(a) and (b).

In obtaining the tables, we first construct the transition probability matrix according to the transition rules of the BMS and the claim frequency distribution with the assumed parameters (see Example \ref{ex.2}) and obtain the stationary probability $\pi_{(\ell)_a}^*$ as explained at the beginning of Section \ref{SEC32}. Then, $\mathbb{P}(L=\ell)$ can be calculated from (\ref{eq.ahn5}) and (\ref{PrLl}). Note that integration with respect to $\theta$ under the distributional assumption on the density $g(\theta)$ of $\Theta$ is needed in (\ref{eq.ahn5}). While explicit evaluation of such one dimensional integral is not possible in general, numerical integration can be easily implemented in most computing software. In turn, the optimal relativities $\tilde{\boldsymbol{\zeta}}=\{\tilde{\zeta}(\ell)\}_{\ell=0}^z$ can be obtained from Proposition \ref{thm.oh.1}, where numerical integration is again performed based on \eqref{numeratorProp1part1} and \eqref{eq.4077} (but the integral appearing in \eqref{eq.4077} is the same as that in (\ref{eq.ahn5})). Calculation of HMSE can be similarly done via \eqref{HMSEDefModel1} once one has calculated the optimal relativities $\tilde{\boldsymbol{\zeta}}$.

We first look at the $-1/+1/pen$ system in Table \ref{table.1}(a). For each fixed value of $pen=0, 1, 2, 3$, most of the population is in the lowest BM level 0 (as $\mathbb{P}(L=0)$ is at least $82\%$), which can be explained by the relatively low mean claim frequency of $\lambda_0=0.05$. By inspecting $\mathbb{P}(L=\ell)$ for $pen=0, 1, 2, 3$, we observe that an increase in $pen$ tends to move some of the population toward higher BM levels, leading to a diversification of the stationary BM levels. This is because a higher $pen$ means that more claim-free years are needed to reduce the BM level by one, rendering it more difficult for drivers at a high BM level to move downward. 
As mentioned in Proposition \ref{thm.oh.1}, the optimal relativities under $pen=0$ are the same as those in \citet{Chong}. As $pen$ increases, it is important to observe that the optimal relativity $\tilde{\zeta}(\ell)$ for each fixed BM level $\ell$ decreases. Although a lower BM relativity means that a driver occupying a given BM level pays less expensive premium in a $-1/+1/pen$ system with a higher $pen$, the insurer's premium income is in turn compensated by an increased portion of drivers occupying higher BM levels as $pen$ increases. A higher $pen$ also results in an improvement of the HMSE in this example.
Similar patterns can be found in $-1/+2/pen$ system as shown in Table \ref{table.1}(b). 
However, for fixed $pen = 1, 2, 3$, the optimal relativity $\tilde{\zeta}(\ell)$ in Table \ref{table.1}(b) is not necessarily increasing in $\ell$ especially for low BM levels. For example, it is observed that $\tilde{\zeta}(2)$ is slightly smaller than $\tilde{\zeta}(1)$. With $h=2$, a possible explanation for $\tilde{\zeta}(2)<\tilde{\zeta}(1)$ is that certain policyholders occupying BM level 2 could be those who were in the lowest BM level 0 (and had one claim) in the previous year, and these are still good drivers. On the other hand, those in BM level 1 could not have come directly from BM level 0. It is interesting to note that, although $\tilde{\zeta}(\ell)$ is not always increasing in $\ell$ in Table \ref{table.1}(b), reporting $n\ge 1$ claims moves a policyholder upward by $2n$ BM levels and this always leads to an increase in $\tilde{\zeta}(\ell)$. Nevertheless, potential problems arise when a policyholder moves down from BM level 2 to BM level 1 after a number of claim-free years and then suffers an increase in the optimal relativity (and hence premium). For future work, constrained optimization may be considered by requiring $\tilde{\zeta}(\ell)$ to be increasing in $\ell$, but it is unlikely that this will lead to optimal relativities that admit expressions as explicit as those in Propositions \ref{thm.oh.1} and \ref{prop.2}.

When $\lambda_0=1$, we consider BM levels up to $z=14$ because the assumption of a higher mean claim frequency will result in a larger proportion of policyholders occupying higher BM levels. The $-1/+1/pen$ system in Table \ref{table.1}(c) shows some similar features: as $pen$ increases, $\tilde{\zeta}(\ell)$ decreases and the distribution (at equilibrium) of policyholders shifts from lower BM levels to higher ones. However, since a significant fraction of the population is already in the highest BM level 14 when $pen=0$, increasing $pen$ anti-diversifies the stationary distribution of BM levels. We observe that HMSE is increased with higher $pen$.

We also notice one interesting phenomenon when comparing the cases where the claim frequencies are of different magnitude. In Tables \ref{table.1}(a) and (b) for which $\lambda_0=0.05$, the optimal relativity $\tilde{\zeta}(0)$ for BM level 0 is still at least 69\% while that for the highest BM level can be as high as 17. An intuitive explanation is that, given the low claim frequency $\lambda_0=0.05$, it is not unusual that drivers do not file claims. As a result, claim-free drivers (those occupying the lowest BM level) are not necessary much better drivers than others and therefore they are not rewarded with much premium discount. On the other hand, drivers occupying the highest BM level must have reported some claims in the past years, and they are more likely to be worse drivers with more risky unobserved risk characteristics (that is, higher $\Theta$) and thus more severely penalized. Moving to Table \ref{table.1}(c) for which $\lambda_0=1$, it becomes much more common for drivers to report claims and occupy higher BM levels, so they are not penalized too much with the highest BM relativity $\tilde{\zeta}(14)$ around 2. But it is those at lower BM levels who have claim-free years that should be regarded as significantly better drivers, and they are given great discount with the BM relativity $\tilde{\zeta}(0)$ no larger than 24.6\%.

From this example, we conclude that while the extension of the classical $-1/+h$ system to the $-1/+h/pen$ system provides room for improvement of the prediction power via lowering the HMSE, such an improvement is not always guaranteed.
\end{example}

\begin{example}\label{ex.4} \normalfont
Under the collective risk model with random effect described as Model \ref{mod.2}, we assume one risk class only, where the distribution function $F_1$ in \eqref{eq.n} for the frequency part follows a Poisson distribution with $\lambda^{[1]}=\lambda_0^{[1]}$ and the distribution function $F_2$ in \eqref{eq.s} for the severity part follows a Gamma distribution with $\lambda^{[2]}=\lambda_0^{[2]}$ and $\psi^{[2]}=\psi_0^{[2]}$.
The marginal distributions of the random effects $(\Theta^{[1]}, \Theta^{[2]})$ are specified as
\[
\begin{cases}
\Theta^{[1]} ~\sim &{\rm Lognormal}(-\sigma_1^2/2, \sigma_1^2) \qquad {\rm with} \quad \sigma_1^2=0.99,\\
\Theta^{[2]} ~\sim &{\rm Lognormal}(-\sigma_2^2/2, \sigma_2^2) \qquad {\rm with} \quad \sigma_2^2=0.29,
\end{cases}
\]
with distribution functions $G_1$ and $G_2$ respectively, and the dependence is described by a Gaussian copula $C$ with correlation coefficient $\rho=-0.45$ (so that the joint distribution function is given by $H=C(G_1, G_2)$ in \eqref{eq.4}).
With $pen=0,1,2,3$, we have calculated the values of the stationary probabilities $\mathbb{P}(L=\ell)$ and the optimal BM relativities $\tilde{\zeta}(\ell)$ for all $\ell$'s together with the HMSE under a $-1/+1/pen$ system for the cases $z=9$ and $\lambda_0^{[1]}=0.05$ (Table \ref{table.2}(a)) as well as $z=14$ and $\lambda_0^{[1]}=1$ (Table \ref{table.2}(c)) and under a $-1/+2/pen$ system for the case $z=9$ and $\lambda_0^{[1]}=0.05$ (Table \ref{table.2}(b)). In constructing Table \ref{table.2}, the parameters $\lambda_0^{[2]}=\exp(8.00)$, and $\psi_0^{[2]}=1/0.670$ are kept fixed. Note that the values of HMSE in Table \ref{table.2} are much higher than those in Table \ref{table.1} concerning Example \ref{ex.3}, and this is simply because of the fact that the HMSE in the present example measures the error concerning the aggregate claim (and the individual severity has large magnitude as $\lambda_0^{[2]}=\exp(8.00)$) while the HMSE in Example \ref{ex.3} is about the error in the claim number. Moreover, some of the optimal relativities in Table \ref{table.2} are quite different from those in Table \ref{table.1}, showing the importance of taking into account the dependence between claim frequency and claim severity for modeling purposes. Nonetheless, similar patterns to those in Table \ref{table.1} are observed in Table \ref{table.2}. Note also that the stationary probabilities in Table \ref{table.2} are identical to their counterparts in Table \ref{table.1} because the probabilities only depend on the distribution of the claim frequency.
\end{example}
\vspace{-.2in}


\begin{table}[h!]
  \caption{(Example 3) Distribution of $L$ and optimal relativities under the frequency-only Model \ref{mod.1}} 
  \label{table.1}\vspace{-.05in}
  \centering
  \resizebox{0.9\textwidth}{!}{ 
 \begin{tabular}{ l c c c c c c c c c c c c c c c c c c c }
 \multicolumn{12}{l}{(a) $-1/+1/pen$ systems with various $pen$ for $\lambda_0=0.05$}\\
 \hline
&$pen$ &  \multicolumn{2}{c}{0} && \multicolumn{2}{c}{1} && \multicolumn{2}{c}{2} && \multicolumn{2}{c}{3} \\\cline{3-4} \cline{6-7}\cline{9-10}\cline{12-13}
Level $\ell$ && $\tilde{\zeta}(\ell)$ & $\mathbb{P}(L=\ell)$ && $\tilde{\zeta}(\ell)$ & $\mathbb{P}(L=\ell)$ && $\tilde{\zeta}(\ell)$ & $\mathbb{P}(L=\ell)$ &&  $\tilde{\zeta}(\ell)$ & $\mathbb{P}(L=\ell)$  \\
 \hline
9	&&	17.096	&	0.001	&&	12.635	&	0.002	&&	10.288	&	0.004	&&	8.801	&	0.008	\\
8	&&	13.947	&	0.000	&&	9.626	&	0.001	&&	7.468	&	0.002	&&	6.151	&	0.002	\\
7	&&	11.978	&	0.000	&&	8.413	&	0.001	&&	6.597	&	0.002	&&	5.477	&	0.002	\\
6	&&	10.384	&	0.000	&&	7.331	&	0.001	&&	5.786	&	0.002	&&	4.832	&	0.003	\\
5	&&	9.408	&	0.001	&&	6.282	&	0.001	&&	4.985	&	0.003	&&	4.187	&	0.004	\\
4	&&	7.304	&	0.001	&&	5.197	&	0.002	&&	4.159	&	0.004	&&	3.520	&	0.006	\\
3	&&	5.552	&	0.002	&&	4.035	&	0.005	&&	3.284	&	0.008	&&	2.815	&	0.012	\\
2	&&	3.676	&	0.007	&&	2.816	&	0.015	&&	2.367	&	0.023	&&	2.076	&	0.030	\\
1	&&	2.003	&	0.044	&&	1.676	&	0.073	&&	1.483	&	0.095	&&	1.348	&	0.111	\\
0	&&	0.887	&	0.944	&&	0.826	&	0.898	&&	0.779	&	0.858	&&	0.740	&	0.821	\\
\hline\hline																			
HMSE	&& \multicolumn{2}{c}{0.00287}	&&	\multicolumn{2}{c}{0.00249}&&	\multicolumn{2}{c}{0.00227}&&	\multicolumn{2}{c}{0.00213}	\\ \hline
\end{tabular}
} 
  \bigskip

\resizebox{0.9\textwidth}{!}{ 
 \begin{tabular}{ l c c c c c c c c c c c c c c c c c c c }
 \multicolumn{12}{l}{(b) $-1/+2/pen$ systems with various $pen$ for $\lambda_0=0.05$}\\
 \hline
&$pen$ &  \multicolumn{2}{c}{0} && \multicolumn{2}{c}{1} && \multicolumn{2}{c}{2} && \multicolumn{2}{c}{3}  \\\cline{3-4} \cline{6-7}\cline{9-10}\cline{12-13}
 Level $\ell$ && $\tilde{\zeta}(\ell)$ & $\mathbb{P}(L=\ell)$ && $\tilde{\zeta}(\ell)$ & $\mathbb{P}(L=\ell)$ && $\tilde{\zeta}(\ell)$ & $\mathbb{P}(L=\ell)$ &&  $\tilde{\zeta}(\ell)$ & $\mathbb{P}(L=\ell)$ \\
 \hline
9	&&	10.288	&	0.002	&&	8.105	&	0.007	&&	6.858	&	0.013	&&	6.030	&	0.020	\\
8	&&	8.106	&	0.002	&&	5.793	&	0.004	&&	4.567	&	0.005	&&	3.807	&	0.007	\\
7	&&	6.755	&	0.002	&&	5.101	&	0.004	&&	4.209	&	0.005	&&	3.633	&	0.006	\\
6	&&	5.410	&	0.003	&&	4.006	&	0.005	&&	3.263	&	0.008	&&	2.796	&	0.011	\\
5	&&	4.638	&	0.003	&&	3.674	&	0.005	&&	3.135	&	0.007	&&	2.772	&	0.008	\\
4	&&	3.349	&	0.007	&&	2.599	&	0.014	&&	2.205	&	0.020	&&	1.950	&	0.027	\\
3	&&	2.973	&	0.009	&&	2.497	&	0.012	&&	2.206	&	0.013	&&	2.000	&	0.014	\\
2	&&	1.779	&	0.041	&&	1.508	&	0.066	&&	1.351	&	0.086	&&	1.241	&	0.101	\\
1	&&	1.677	&	0.038	&&	1.512	&	0.034	&&	1.394	&	0.030	&&	1.303	&	0.028	\\
0	&&	0.811	&	0.892	&&	0.762	&	0.850	&&	0.723	&	0.812	&&	0.691	&	0.778	\\
\hline\hline	
HMSE	&& \multicolumn{2}{c}{0.00260}	&&	\multicolumn{2}{c}{0.00244}&&	\multicolumn{2}{c}{0.00235}&&	\multicolumn{2}{c}{0.00229}	\\ \hline
\end{tabular}
}

  \bigskip

\resizebox{0.9\textwidth}{!}{ 
 \begin{tabular}{ l c c c c c c c c c c c c c c c c c c c }
 \multicolumn{12}{l}{(c) $-1/+1/pen$ systems with various $pen$ for $\lambda_0=1$}\\

 \hline
&$pen$ &  \multicolumn{2}{c}{0} && \multicolumn{2}{c}{1} && \multicolumn{2}{c}{2} && \multicolumn{2}{c}{3}  \\\cline{3-4} \cline{6-7}\cline{9-10}\cline{12-13}
 Level $\ell$ && $\tilde{\zeta}(\ell)$ & $\mathbb{P}(L=\ell)$ && $\tilde{\zeta}(\ell)$ & $\mathbb{P}(L=\ell)$ && $\tilde{\zeta}(\ell)$ & $\mathbb{P}(L=\ell)$ &&  $\tilde{\zeta}(\ell)$ & $\mathbb{P}(L=\ell)$ \\
 \hline
14	&&	2.092	&	0.311	&&	1.713	&	0.463	&&	1.528	&	0.562	&&	1.416	&	0.634	\\
13	&&	1.147	&	0.089	&&	0.791	&	0.060	&&	0.618	&	0.045	&&	0.512	&	0.035	\\
12	&&	0.914	&	0.044	&&	0.677	&	0.036	&&	0.546	&	0.030	&&	0.462	&	0.025	\\
11	&&	0.794	&	0.027	&&	0.602	&	0.025	&&	0.494	&	0.022	&&	0.422	&	0.020	\\
10	&&	0.718	&	0.020	&&	0.548	&	0.019	&&	0.453	&	0.018	&&	0.390	&	0.016	\\
9	&&	0.662	&	0.016	&&	0.505	&	0.016	&&	0.419	&	0.015	&&	0.362	&	0.014	\\
8	&&	0.618	&	0.014	&&	0.470	&	0.015	&&	0.390	&	0.014	&&	0.337	&	0.013	\\
7	&&	0.579	&	0.014	&&	0.439	&	0.014	&&	0.363	&	0.014	&&	0.314	&	0.013	\\
6	&&	0.543	&	0.014	&&	0.410	&	0.015	&&	0.339	&	0.014	&&	0.293	&	0.013	\\
5	&&	0.508	&	0.016	&&	0.382	&	0.016	&&	0.315	&	0.016	&&	0.273	&	0.014	\\
4	&&	0.470	&	0.020	&&	0.353	&	0.020	&&	0.291	&	0.018	&&	0.252	&	0.017	\\
3	&&	0.429	&	0.027	&&	0.322	&	0.026	&&	0.266	&	0.024	&&	0.230	&	0.021	\\
2	&&	0.380	&	0.042	&&	0.286	&	0.038	&&	0.238	&	0.033	&&	0.207	&	0.029	\\
1	&&	0.321	&	0.078	&&	0.246	&	0.065	&&	0.206	&	0.053	&&	0.180	&	0.044	\\
0	&&	0.246	&	0.269	&&	0.197	&	0.172	&&	0.169	&	0.122	&&	0.150	&	0.091	\\
\hline\hline																			
HMSE	&& \multicolumn{2}{c}{1.08320}	&&	\multicolumn{2}{c}{1.23048}&&	\multicolumn{2}{c}{1.32241}&&	\multicolumn{2}{c}{1.38605}	\\ \hline
\end{tabular}
} 
\end{table}


\begin{table}[h!]
  \caption{(Example 4) Distribution of $L$ and optimal relativities under the frequency-severity Model \ref{mod.2} with dependence} \label{table.2}\vspace{-.05in}
  \centering
  \resizebox{0.9\textwidth}{!}{ 
 \begin{tabular}{ l c c c c c c c c c c c c c c c c c c c }
 \multicolumn{12}{l}{(a) $-1/+1/pen$ systems with various $pen$ for $\lambda_0^{[1]}=0.05$}\\
 \hline
&$pen$ &  \multicolumn{2}{c}{0} && \multicolumn{2}{c}{1} && \multicolumn{2}{c}{2} && \multicolumn{2}{c}{3} \\\cline{3-4} \cline{6-7}\cline{9-10}\cline{12-13}
Level $\ell$ && $\tilde{\zeta}(\ell)$ & $\mathbb{P}(L=\ell)$ && $\tilde{\zeta}(\ell)$ & $\mathbb{P}(L=\ell)$ && $\tilde{\zeta}(\ell)$ & $\mathbb{P}(L=\ell)$ &&  $\tilde{\zeta}(\ell)$ & $\mathbb{P}(L=\ell)$  \\
 \hline
9	&&	7.217	&	0.001	&&	5.749	&	0.002	&&	4.917	&	0.004	&&	4.366	&	0.008	\\
8	&&	6.246	&	0.000	&&	4.715	&	0.001	&&	3.889	&	0.002	&&	3.357	&	0.002	\\
7	&&	5.573	&	0.000	&&	4.259	&	0.001	&&	3.540	&	0.002	&&	3.073	&	0.002	\\
6	&&	5.000	&	0.000	&&	3.834	&	0.001	&&	3.202	&	0.002	&&	2.792	&	0.003	\\
5	&&	4.695	&	0.001	&&	3.404	&	0.001	&&	2.854	&	0.003	&&	2.500	&	0.004	\\
4	&&	3.805	&	0.001	&&	2.936	&	0.002	&&	2.479	&	0.004	&&	2.184	&	0.006	\\
3	&&	3.065	&	0.002	&&	2.407	&	0.005	&&	2.060	&	0.008	&&	1.834	&	0.012	\\
2	&&	2.210	&	0.007	&&	1.812	&	0.015	&&	1.591	&	0.023	&&	1.443	&	0.030	\\
1	&&	1.369	&	0.044	&&	1.203	&	0.073	&&	1.101	&	0.095	&&	1.026	&	0.111	\\
0	&&	0.727	&	0.944	&&	0.692	&	0.898	&&	0.665	&	0.858	&&	0.641	&	0.821\\ 
\hline\hline																			
HMSE	&& \multicolumn{2}{c}{14179.63}	&&	\multicolumn{2}{c}{13189.89}&&	\multicolumn{2}{c}{12525.65}&&	\multicolumn{2}{c}{12053.38}	\\ \hline
\end{tabular}
} 

  \bigskip
  \resizebox{0.9\textwidth}{!}{ 
 \begin{tabular}{ l c c c c c c c c c c c c c c c c c c c }
 \multicolumn{12}{l}{(b) $-1/+2/pen$ systems with various $pen$ for $\lambda_0^{[1]}=0.05$}\\
 \hline
&$pen$ &  \multicolumn{2}{c}{0} && \multicolumn{2}{c}{1} && \multicolumn{2}{c}{2} && \multicolumn{2}{c}{3}  \\\cline{3-4} \cline{6-7}\cline{9-10}\cline{12-13}
 Level $\ell$ && $\tilde{\zeta}(\ell)$ & $\mathbb{P}(L=\ell)$ && $\tilde{\zeta}(\ell)$ & $\mathbb{P}(L=\ell)$ && $\tilde{\zeta}(\ell)$ & $\mathbb{P}(L=\ell)$ &&  $\tilde{\zeta}(\ell)$ & $\mathbb{P}(L=\ell)$ \\
 \hline
 9	&&	4.876	&	0.002	&&	4.069	&	0.007	&&	3.582	&	0.013	&&	3.247	&	0.020	\\
8	&&	4.093	&	0.002	&&	3.170	&	0.004	&&	2.647	&	0.005	&&	2.307	&	0.007	\\
7	&&	3.568	&	0.002	&&	2.882	&	0.004	&&	2.491	&	0.005	&&	2.229	&	0.006	\\
6	&&	3.005	&	0.003	&&	2.389	&	0.005	&&	2.045	&	0.008	&&	1.819	&	0.011	\\
5	&&	2.672	&	0.003	&&	2.239	&	0.005	&&	1.985	&	0.007	&&	1.809	&	0.008	\\
4	&&	2.067	&	0.007	&&	1.706	&	0.014	&&	1.509	&	0.020	&&	1.376	&	0.027	\\
3	&&	1.887	&	0.009	&&	1.656	&	0.012	&&	1.510	&	0.013	&&	1.403	&	0.014	\\
2	&&	1.259	&	0.041	&&	1.115	&	0.066	&&	1.029	&	0.086	&&	0.966	&	0.101	\\
1	&&	1.205	&	0.038	&&	1.117	&	0.034	&&	1.053	&	0.030	&&	1.002	&	0.028	\\
0	&&	0.684	&	0.892	&&	0.655	&	0.850	&&	0.631	&	0.812	&&	0.611	&	0.778	\\ 
\hline\hline																
HMSE	&& \multicolumn{2}{c}{13230.45}	&&	\multicolumn{2}{c}{12675.38}&&	\multicolumn{2}{c}{12303.48}&&	\multicolumn{2}{c}{12042.78}	\\ \hline
\end{tabular}
} 

  \bigskip

  \resizebox{0.9\textwidth}{!}{ 
 \begin{tabular}{ l c c c c c c c c c c c c c c c c c c c }
 \multicolumn{12}{l}{(c) $-1/+1/pen$ systems with various $pen$ for $\lambda_0^{[1]}=1$}\\
 \hline
&$pen$ &  \multicolumn{2}{c}{0} && \multicolumn{2}{c}{1} && \multicolumn{2}{c}{2} && \multicolumn{2}{c}{3}  \\\cline{3-4} \cline{6-7}\cline{9-10}\cline{12-13}
 Level $\ell$ && $\tilde{\zeta}(\ell)$ & $\mathbb{P}(L=\ell)$ && $\tilde{\zeta}(\ell)$ & $\mathbb{P}(L=\ell)$ && $\tilde{\zeta}(\ell)$ & $\mathbb{P}(L=\ell)$ &&  $\tilde{\zeta}(\ell)$ & $\mathbb{P}(L=\ell)$ \\
 \hline
14	&&	1.439	&	0.311	&&	1.230	&	0.463	&&	1.123	&	0.562	&&	1.055	&	0.634	\\
13	&&	0.939	&	0.089	&&	0.711	&	0.060	&&	0.590	&	0.045	&&	0.513	&	0.035	\\
12	&&	0.796	&	0.044	&&	0.634	&	0.036	&&	0.539	&	0.030	&&	0.475	&	0.025	\\
11	&&	0.717	&	0.027	&&	0.581	&	0.025	&&	0.500	&	0.022	&&	0.444	&	0.020	\\
10	&&	0.666	&	0.020	&&	0.542	&	0.019	&&	0.468	&	0.018	&&	0.418	&	0.016	\\
9	&&	0.627	&	0.016	&&	0.510	&	0.016	&&	0.442	&	0.015	&&	0.395	&	0.014	\\
8	&&	0.595	&	0.014	&&	0.483	&	0.015	&&	0.419	&	0.014	&&	0.375	&	0.013	\\
7	&&	0.567	&	0.014	&&	0.459	&	0.014	&&	0.397	&	0.014	&&	0.356	&	0.013	\\
6	&&	0.540	&	0.014	&&	0.436	&	0.015	&&	0.377	&	0.014	&&	0.337	&	0.013	\\
5	&&	0.513	&	0.016	&&	0.413	&	0.016	&&	0.357	&	0.016	&&	0.319	&	0.014	\\
4	&&	0.483	&	0.020	&&	0.388	&	0.020	&&	0.335	&	0.018	&&	0.300	&	0.017	\\
3	&&	0.450	&	0.027	&&	0.361	&	0.026	&&	0.313	&	0.024	&&	0.280	&	0.021	\\
2	&&	0.409	&	0.042	&&	0.330	&	0.038	&&	0.287	&	0.033	&&	0.258	&	0.029	\\
1	&&	0.359	&	0.078	&&	0.293	&	0.065	&&	0.256	&	0.053	&&	0.232	&	0.044	\\
0	&&	0.291	&	0.269	&&	0.246	&	0.172	&&	0.219	&	0.122	&&	0.201	&	0.091	\\
\hline\hline	
HMSE	&& \multicolumn{2}{c}{4649638}	&&	\multicolumn{2}{c}{5052378}&&	\multicolumn{2}{c}{5328804}&&	\multicolumn{2}{c}{5532048}	\\ \hline	
\end{tabular}
} 

\end{table}

%
%

%
%

\section{Data analysis}\label{num}


\subsection{Summary of data estimation results}

In this subsection, we summarize the data estimation results using real data. In order to examine the effect of dependence on ratemaking, we utilize a data set concerning collision coverage for new and old vehicles from the Wisconsin Local Government Property Insurance Fund (LGPIF) \citep{Frees4}, where detailed information on the project can be found at the LGPIF project website. 
Such collision coverage provides cover for impact of vehicle with an object, impact of vehicle with an attached vehicle, and overturn of a vehicle. The observations include policyholders who have either new collision coverage or old collision coverage or both. In our data analysis, longitudinal data over the policy years from 2006 to 2010 with 497 governmental entities are used. There are two categorical variables: the entity type with six levels, and the coverage with three levels as shown in Table \ref{tab.x} (which is the same as Table C.6 in \citet{PengAhn}).

\begin{table}[h!t!]
\centering
\caption{(Data analysis) Observable policy characteristics used as covariates} \label{tab.x}\vspace{-.05in}
\begin{tabular}{l|l r r r r r r r }
\hline
Categorical & \multirow{2}{*}{Description} &&  \multicolumn{3}{c}{\multirow{2}{*}{Proportions}} \\
variables &  &  &  &   \\
\hline
Entity type   & Type of local government entity    \\
		& \quad\quad\quad\quad\quad\quad Miscellaneous  	&& \multicolumn{3}{c}{5.03$\%$} \\
		& \quad\quad\quad\quad\quad\quad City			&& \multicolumn{3}{c}{9.66$\%$} \\
		& \quad\quad\quad\quad\quad\quad County			&& \multicolumn{3}{c}{11.47$\%$} \\
		& \quad\quad\quad\quad\quad\quad School			&& \multicolumn{3}{c}{36.42$\%$} \\
		& \quad\quad\quad\quad\quad\quad Town			&& \multicolumn{3}{c}{16.90$\%$} \\
		& \quad\quad\quad\quad\quad\quad Village 			&& \multicolumn{3}{c}{20.52$\%$} \\
\hline
Coverage & Collision coverage amount for old and new vehicles\\
		& \quad\quad\quad\quad\quad\quad Coverage $\in (0,0.14] = 1 $   && \multicolumn{3}{c}{33.40$\%$} \\
		& \quad\quad\quad\quad\quad\quad Coverage $\in (0.14,0.74] = 2 $	&& \multicolumn{3}{c}{33.20$\%$} \\
		& \quad\quad\quad\quad\quad\quad Coverage $\in (0.74,\infty) = 3$	&& \multicolumn{3}{c}{33.40$\%$} \\
\hline
\end{tabular}
\end{table}

Specifically, for Model \ref{mod.1} in the data analysis, as in \eqref{eq.402} we consider
$$
N_{it}|(\Theta_i=\theta_i,\boldsymbol{X}_{i}=\boldsymbol{x}_{i})\sim {\rm Poisson}(\lambda_i\theta_i)\quad\hbox{with}\quad \lambda_i:=\exp(\boldsymbol{x}_i\boldsymbol{\beta})
$$
where $\boldsymbol{x}_i$ is a row vector containing the $i$-th policyholder's information of the categorical variables in Table \ref{tab.x} for the frequency, and the column vector $\boldsymbol{\beta}$ contains the corresponding parameters. For the random effect $\Theta$, we assume a Lognormal distribution with mean 1 and variance of $\exp(\sigma^2)-1$, that is,
$$\Theta \sim {\rm Lognormal}(-\sigma^2/2, \sigma^2).$$
Summary statistics of the posterior samples for the parameters in Model \ref{mod.1} using Bayesian approach are presented in Table \ref{est.model1}. The table includes the posterior median (Est), the posterior standard deviation (Std.dev), and the 95\% highest posterior density Bayesian credible interval (95$\%$ CI). Note that a $*$ sign indicates that the parameters are significant at 0.05 level. In estimating the parameters in Table \ref{est.model1}, we have run 30,000 MCMC iterations saving every 5th sample after burn-in of 15,000 iterations by using JAGS. Standard MCMC diagnostics gave no indication of lack of convergence. For Model \ref{mod.2}, 
in line with \eqref{eq.n} and \eqref{model2Yassump}, we assume
$$N_{it}|(\Theta_i^{[1]}=\theta_i^{[1]},\boldsymbol{X}_{i}^{[1]}=\boldsymbol{x}_{i}^{[1]})\sim {\rm Poisson}(\lambda_i^{[1]}\theta_i^{[1]})$$
for the frequency, and
  \begin{equation*}
       Y_{itj}|(\Theta_i^{[2]}=\theta_i^{[2]}, \boldsymbol{X}_{i}^{[2]}=\boldsymbol{x}_{i}^{[2]}) \sim {\rm Gamma}(\lambda_i^{[2]}\theta_i^{[2]},1/\psi^{[2]})
  \end{equation*}
for the individual claim severity, where $\lambda_i^{[2]}\theta_i^{[2]}$ is the mean and $1/\psi^{[2]}$ is the shape parameter of the Gamma distribution.
With log link, we assume
\[
\lambda_i^{[1]}=\exp(\boldsymbol{x}_i^{[1]}\boldsymbol{\beta}^{[1]})
\quad \hbox{and} \quad
\lambda_i^{[2]}=\exp(\boldsymbol{x}_i^{[2]}\boldsymbol{\beta}^{[2]}),
\]
where we take $\boldsymbol{x}_i^{[1]}=\boldsymbol{x}_i^{[2]}$, and $\boldsymbol{\beta}^{[1]}$ and $\boldsymbol{\beta}^{[2]}$ are the corresponding parameters.

For the bivariate random effect $(\Theta^{[1]},\Theta^{[2]})$, we assume a Gaussian copula $C$ in \eqref{eq.4} with correlation coefficient $\rho$,
and the marginal distributions are assumed to follow Lognormal distributions with different parameters specified as
\[
\begin{cases}
\Theta^{[1]} ~\sim& {\rm Lognormal}(-\sigma_1^2/2, \sigma_1^2),\\
\Theta^{[2]} ~\sim& {\rm Lognormal}(-\sigma_2^2/2, \sigma_2^2).
\end{cases}
\]
The estimation results from \citet{PengAhn} are summarized in Table \ref{est.model2}, and we refer interested readers to \citet{PengAhn} for the details of the estimation procedure and further results. In particular, the dependence parameter of the Gaussian copula is estimated to be $\rho = -0.447$, suggesting a significant negative dependence between the random effects $\Theta^{[1]}$ and $\Theta^{[2]}$ of the frequency and the severity respectively.

\subsection{Analysis of optimal relativities in modified BMS}
\begin{table}[h!]
\caption{(Data analysis) Estimation results under the frequency-only Model 1}\vspace{-.05in}
\centering
\begin{tabular}{ l r r r r r r r  }
 \hline

parameter& Est & Std.dev & \multicolumn{2}{c}{95$\%$ CI} \\
& & & lower & upper \\
 \hline
 \multicolumn{4}{l}{ {\bf Fixed effect}} \\
(Intercept)	&	-2.798	&	0.312	&	-3.433	&	-2.225&*	\\
Type=City		&	0.601	&	0.337	&	-0.044	&	1.280&	\\
Type=County	&	1.923	&	0.329	&	1.257	&	2.543&*	\\
Type=School	&	0.438	&	0.303	&	-0.172	&	1.023&	\\
Type=Town		&	-1.343	&	0.380	&	-2.087	&	-0.600&*	\\
Type=Village	&	-0.005	&	0.320	&	-0.630	&	0.618&	\\
Coverage=2	&	1.254	&	0.214	&	0.820	&	1.651&*	\\
Coverage=3	&	2.156	&	0.231	&	1.688	&	2.587&*	\\
\hline
 \multicolumn{4}{l}{ {\bf Random effect}} \\
$\sigma^2$&  0.993	&	0.14	&	0.733	&	1.276&*	\\
 \hline
\end{tabular}
\label{est.model1}
\end{table}

Using the posterior median as estimates of the parameters in Models \ref{mod.1} and \ref{mod.2} when $z=14$, we calculate the optimal relativity and stationary probability for each BM level as well as the values of HMSE under the modified BM transition rules for a $-1/+1/pen$ system and a $-1/+2/pen$ system when $pen=0,1,2,3$. The results are summarized in Tables \ref{table.3} and \ref{table.4}. We remark that, unlike in Examples \ref{ex.3} and \ref{ex.4} where the stationary probabilities $\mathbb{P}(L=\ell)$ are identical across Tables \ref{table.1} and \ref{table.2}, the stationary probabilities in Tables \ref{table.3} and \ref{table.4} are close but not identical. The reason is that the frequency parameters for Model 2 in Table \ref{est.model2} are estimated jointly with the severity parameters and therefore the results are slightly different from those for Model 1 given in Table \ref{est.model1} where only frequency parameters are estimated. Note also that the parameters are estimated using MCMC sampling which has also possibly contributed to some differences.

Comparing across Tables \ref{table.3} and \ref{table.4}, one can observe the impact of frequency-severity dependence on the optimal relativities. In particular, under negative dependence between frequency and severity, which is implied from the negative dependence between the two random effects with $\rho = -0.447$, the optimal relativities tend to decrease for higher BM levels but increase for lower BM levels. Consequently, the optimal relativities in Table \ref{table.4} are less spread out than those in Table \ref{table.3}. 
It is noted that all optimal BM relativities in Table \ref{table.4} are less than 1, which may be counter-intuitive at a first glance. However, it is noted that, roughly speaking, $\tilde{\zeta}(\ell)$ may be regarded as an estimate of $\Theta^{[1]}\Theta^{[2]}$ when $L^*=(\ell)_a$ for some $a$ (see \eqref{eq.21} and the optimization \eqref{eq.4066} in Model \ref{mod.2}). Since $\Theta^{[1]}$ and $\Theta^{[2]}$ are negatively dependent (with both having mean 1), when one of them is large the other is more likely to be small, making the product unlikely to be larger than 1.

Both Tables \ref{table.3} and \ref{table.4} confirm that as $pen$ increases, each optimal relativity $\tilde{\zeta}(\ell)$ (for fixed $\ell$) decreases and the proportion of policyholders in higher BM levels increases. Explanations similar to those in Examples \ref{ex.3} and \ref{ex.4} are applicable. Certain degree of diversification effects on the BM levels as $pen$ increases resembling Tables \ref{table.1}(a) and (b) in Example \ref{ex.3} and Tables \ref{table.2}(a) and (b) in Example \ref{ex.4} can be observed from Tables \ref{table.3} and \ref{table.4}. Meanwhile, the HMSE values show that a higher $pen$ leads to a decrease in prediction power in this example. The higher HMSE in Table \ref{table.4} compared to Table \ref{table.3} is again attributed to the fact that Model \ref{mod.2} is concerned with the aggregate claim while Model \ref{mod.1} is about claim number, and interested readers are referred to Tables A4 and A5 in \citet{PengAhn} for the magnitude of the average claim severity in Model \ref{mod.2}.

Lastly, we can compare the stationary distribution $L$ of the BM level and the optimal relativities $\tilde{\zeta}(\ell)$ under two different BM systems: (i) a $-1/+1/2$ system with a small increase of BM level per claim ($h=1$) but a period of penalty ($pen=2$); and (ii) a $-1/+2/0$ system with a larger increase of BM level per reported claim ($h=2$) but without period of penalty ($pen=0$). From Tables \ref{table.3}(a) and (b) concerning Model 1, the presence of a period of penalty leads to a smaller proportion of policyholders in the lowest level 0 (with $\mathbb{P}(L=0)=0.505$ in the former model and $\mathbb{P}(L=0)=0.542$ in the latter) but a larger proportion in the highest level 14 (with $\mathbb{P}(L=14)=0.237$ in the former model and $\mathbb{P}(L=14)=0.173$ in the latter). For a given BM level $\ell$, the optimal relativity $\tilde{\zeta}(\ell)$ in the former model is always lower than the latter, suggesting that a BMS with a period of penalty instead of a higher increase of the BM level per claim can look more attractive to the market if the relativities are available to potential customers. Such effect is even more pronounced if one assumes $pen=3$ instead of $pen=2$. Hence, this numerical example clearly illustrates that different outcomes can be anticipated in terms of the distribution of the BM level as well as the optimal relativities depending on how more rigid the BMS is designed. Instead of (or in addition to) imposing a higher increase of the BM level per claim, our model makes it harder for a policyholder to transit to lower BM levels once he/she has reported a claim by requiring consecutive and multiple claim-free years to enjoy a bonus. We also remark that the same conclusion can be drawn from Tables \ref{table.4}(a) and (b) by comparing the $-1/+1/2$ (or $-1/+1/3$) system and the $-1/+2/0$ system under Model 2.




\begin{table}[h!]

  \caption{(Data analysis) Distribution of $L$ and optimal relativities under frequency-only Model \ref{mod.1}}\vspace{-.05in}
  \centering
 \begin{tabular}{ l c c c c c c c c c c c c c c c c c c c }
 \multicolumn{12}{l}{(a) $-1/+1/pen$ system with various $pen$}\\

 \hline
&$pen$ &  \multicolumn{2}{c}{0} && \multicolumn{2}{c}{1} && \multicolumn{2}{c}{2} && \multicolumn{2}{c}{3} \\\cline{3-4} \cline{6-7}\cline{9-10}\cline{12-13}
Level $\ell$ && $\tilde{\zeta}(\ell)$ & $\mathbb{P}(L=\ell)$ && $\tilde{\zeta}(\ell)$ & $\mathbb{P}(L=\ell)$ && $\tilde{\zeta}(\ell)$ & $\mathbb{P}(L=\ell)$ &&  $\tilde{\zeta}(\ell)$ & $\mathbb{P}(L=\ell)$  \\
 \hline
14	&&	1.295	&	0.143	&&	1.167	&	0.197	&&	1.114	&	0.237	&&	1.086	&	0.268	\\
13	&&	0.484	&	0.031	&&	0.362	&	0.022	&&	0.312	&	0.018	&&	0.285	&	0.016	\\
12	&&	0.385	&	0.015	&&	0.321	&	0.014	&&	0.290	&	0.013	&&	0.271	&	0.012	\\
11	&&	0.342	&	0.009	&&	0.298	&	0.010	&&	0.276	&	0.010	&&	0.262	&	0.010	\\
10	&&	0.319	&	0.007	&&	0.284	&	0.008	&&	0.266	&	0.008	&&	0.255	&	0.009	\\
9	&&	0.304	&	0.006	&&	0.274	&	0.007	&&	0.259	&	0.008	&&	0.249	&	0.008	\\
8	&&	0.293	&	0.005	&&	0.266	&	0.007	&&	0.253	&	0.007	&&	0.244	&	0.008	\\
7	&&	0.285	&	0.005	&&	0.260	&	0.007	&&	0.248	&	0.008	&&	0.241	&	0.009	\\
6	&&	0.277	&	0.006	&&	0.255	&	0.007	&&	0.244	&	0.009	&&	0.238	&	0.010	\\
5	&&	0.271	&	0.007	&&	0.251	&	0.009	&&	0.241	&	0.010	&&	0.235	&	0.012	\\
4	&&	0.265	&	0.009	&&	0.247	&	0.012	&&	0.237	&	0.014	&&	0.233	&	0.015	\\
3	&&	0.259	&	0.014	&&	0.242	&	0.018	&&	0.234	&	0.021	&&	0.229	&	0.023	\\
2	&&	0.252	&	0.026	&&	0.237	&	0.034	&&	0.229	&	0.038	&&	0.225	&	0.041	\\
1	&&	0.243	&	0.072	&&	0.230	&	0.087	&&	0.223	&	0.094	&&	0.218	&	0.098	\\
0	&&	0.228	&	0.645	&&	0.217	&	0.562	&&	0.210	&	0.505	&&	0.205	&	0.462	\\
\hline\hline																			
HMSE	&& \multicolumn{2}{c}{2.42280}	&&	\multicolumn{2}{c}{2.54650}&&	\multicolumn{2}{c}{2.60334}&&	\multicolumn{2}{c}{2.63588}	\\ \hline
\end{tabular}

  \bigskip

 \begin{tabular}{ l c c c c c c c c c c c c c c c c c c c }
 \multicolumn{12}{l}{(b) $-1/+2/pen$ system with various $pen$}\\

 \hline
&$pen$ &  \multicolumn{2}{c}{0} && \multicolumn{2}{c}{1} && \multicolumn{2}{c}{2} && \multicolumn{2}{c}{3}  \\\cline{3-4} \cline{6-7}\cline{9-10}\cline{12-13}

 Level $\ell$ && $\tilde{\zeta}(\ell)$ & $\mathbb{P}(L=\ell)$ && $\tilde{\zeta}(\ell)$ & $\mathbb{P}(L=\ell)$ && $\tilde{\zeta}(\ell)$ & $\mathbb{P}(L=\ell)$ &&  $\tilde{\zeta}(\ell)$ & $\mathbb{P}(L=\ell)$ \\
 \hline
14	&&	1.240	&	0.173	&&	1.130	&	0.236	&&	1.088	&	0.280	&&	1.065	&	0.315	\\
13	&&	0.452	&	0.047	&&	0.346	&	0.031	&&	0.301	&	0.024	&&	0.276	&	0.020	\\
12	&&	0.354	&	0.025	&&	0.305	&	0.021	&&	0.278	&	0.018	&&	0.262	&	0.017	\\
11	&&	0.312	&	0.017	&&	0.282	&	0.015	&&	0.265	&	0.015	&&	0.254	&	0.014	\\
10	&&	0.288	&	0.013	&&	0.267	&	0.013	&&	0.255	&	0.013	&&	0.246	&	0.013	\\
9	&&	0.273	&	0.011	&&	0.258	&	0.011	&&	0.248	&	0.012	&&	0.242	&	0.011	\\
8	&&	0.263	&	0.010	&&	0.250	&	0.011	&&	0.242	&	0.013	&&	0.236	&	0.013	\\
7	&&	0.256	&	0.010	&&	0.245	&	0.011	&&	0.239	&	0.011	&&	0.234	&	0.011	\\
6	&&	0.250	&	0.011	&&	0.240	&	0.014	&&	0.233	&	0.016	&&	0.228	&	0.017	\\
5	&&	0.245	&	0.012	&&	0.237	&	0.013	&&	0.232	&	0.013	&&	0.228	&	0.012	\\
4	&&	0.240	&	0.018	&&	0.231	&	0.023	&&	0.225	&	0.027	&&	0.221	&	0.030	\\
3	&&	0.237	&	0.019	&&	0.230	&	0.019	&&	0.225	&	0.017	&&	0.221	&	0.016	\\
2	&&	0.230	&	0.050	&&	0.221	&	0.064	&&	0.216	&	0.072	&&	0.211	&	0.077	\\
1	&&	0.228	&	0.043	&&	0.221	&	0.033	&&	0.217	&	0.026	&&	0.213	&	0.022	\\
0	&&	0.213	&	0.542	&&	0.206	&	0.486	&&	0.201	&	0.444	&&	0.196	&	0.410	\\
\hline\hline	
HMSE	&& \multicolumn{2}{c}{2.48426}	&&	\multicolumn{2}{c}{2.58912}&&	\multicolumn{2}{c}{2.63540}&&	\multicolumn{2}{c}{2.66122}	\\ \hline	
\end{tabular}
\label{table.3}
\end{table}

%
%
%


\begin{table}[h!]
\caption{(Data analysis) Estimation results under the frequency-severity Model \ref{mod.2} with dependence}\vspace{-.05in}
\centering
\begin{tabular}{ l r r r r l r r r r l c c }
 \hline
&& &\multicolumn{2}{c}{95$\%$ CI}& \\
parameter& Est & Std.dev & lower & upper&  \\
 \hline
 \multicolumn{3}{l}{ \textbf{Frequency part}} \\
\quad Intercept &	-2.767& 	0.318&  -3.417&  -2.153&*&\\
\quad City      & 	 0.597& 	0.337&  -0.051& 	1.272& &\\
\quad County    &  1.907& 	0.335& 	1.271& 	2.587&*&	\\
\quad School    &	 0.411& 	0.304&  -0.181& 	1.014& &\\
\quad Town      &	-1.351& 	0.384&  -2.103&  -0.584&*&\\
\quad Village   &	-0.012& 	0.323&  -0.626& 	0.654& &\\
\quad Coverage2 &	 1.247& 	0.212& 	0.829& 	1.667&*& \\
\quad Coverage3 &  2.139& 	0.230& 	1.713& 	2.615&*& 	\\
\hline
 \multicolumn{3}{l}{ \textbf{Severity part}} \\

\quad Intercept &  	 8.829& 	0.375& 	8.103& 	9.588&*& 	 \\
\quad City   	  & 		-0.036& 	0.353&  -0.737& 	0.637& & \\
\quad County 	  & 		 0.341& 	0.338&  -0.336& 	0.980& & \\
\quad School 	  & 		-0.173& 	0.328&  -0.805& 	0.484& & \\
\quad Town  	  &		 0.497& 	0.440&  -0.356& 	1.349& & \\
\quad Village   &		 0.316& 	0.346&  -0.357& 	0.994& & \\
\quad Coverage2 &		 0.180& 	0.244&  -0.308& 	0.646& & 	\\
\quad Coverage3 & 		-0.027& 	0.261&  -0.533& 	0.493& & \\
\quad $1/\psi^{[2]}$&   0.670& 	0.041& 	0.592& 	0.752&*& 	\\
\hline
\multicolumn{3}{l}{ \textbf{Copula part}} \\

\quad $\sigma^2_1$&  	 0.992& 	0.142& 	0.746& 	1.292&*& \\
\quad $\sigma^2_2$&  	 0.293& 	0.067& 	0.176& 	0.433&*&\\
\quad $\rho$&       	-0.447& 	0.130&  -0.690&  -0.190&*\\
 \hline
\end{tabular}
\label{est.model2}
\end{table}

\begin{table}[h!]

  \caption{(Data analysis) Distribution of $L$ and optimal relativities under the frequency-severity Model \ref{mod.2} with dependence}\vspace{-.05in}
  \centering
 \begin{tabular}{ l c c c c c c c c c c c c c c c c c c c }
 \multicolumn{12}{l}{(a) $-1/+1/pen$ system with various $pen$}\\

 \hline
&$pen$ &  \multicolumn{2}{c}{0} && \multicolumn{2}{c}{1} && \multicolumn{2}{c}{2} && \multicolumn{2}{c}{3} \\\cline{3-4} \cline{6-7}\cline{9-10}\cline{12-13}
Level $\ell$ && $\tilde{\zeta}(\ell)$ & $\mathbb{P}(L=\ell)$ && $\tilde{\zeta}(\ell)$ & $\mathbb{P}(L=\ell)$ && $\tilde{\zeta}(\ell)$ & $\mathbb{P}(L=\ell)$ &&  $\tilde{\zeta}(\ell)$ & $\mathbb{P}(L=\ell)$  \\
 \hline
14	&&	0.968	&	0.144	&&	0.888	&	0.199	&&	0.855	&	0.238	&&	0.837	&	0.270	\\
13	&&	0.459	&	0.031	&&	0.356	&	0.022	&&	0.309	&	0.018	&&	0.282	&	0.016	\\
12	&&	0.376	&	0.015	&&	0.318	&	0.014	&&	0.287	&	0.013	&&	0.268	&	0.012	\\
11	&&	0.338	&	0.010	&&	0.295	&	0.010	&&	0.272	&	0.010	&&	0.259	&	0.010	\\
10	&&	0.316	&	0.007	&&	0.280	&	0.008	&&	0.262	&	0.008	&&	0.252	&	0.009	\\
9	&&	0.301	&	0.006	&&	0.270	&	0.007	&&	0.255	&	0.008	&&	0.247	&	0.008	\\
8	&&	0.290	&	0.005	&&	0.262	&	0.007	&&	0.250	&	0.008	&&	0.243	&	0.008	\\
7	&&	0.281	&	0.005	&&	0.256	&	0.007	&&	0.245	&	0.008	&&	0.240	&	0.009	\\
6	&&	0.274	&	0.006	&&	0.251	&	0.007	&&	0.242	&	0.009	&&	0.239	&	0.010	\\
5	&&	0.267	&	0.007	&&	0.247	&	0.009	&&	0.240	&	0.011	&&	0.237	&	0.012	\\
4	&&	0.261	&	0.009	&&	0.244	&	0.012	&&	0.238	&	0.014	&&	0.237	&	0.015	\\
3	&&	0.256	&	0.014	&&	0.241	&	0.018	&&	0.237	&	0.021	&&	0.237	&	0.023	\\
2	&&	0.250	&	0.026	&&	0.239	&	0.034	&&	0.237	&	0.038	&&	0.237	&	0.041	\\
1	&&	0.245	&	0.072	&&	0.238	&	0.088	&&	0.238	&	0.095	&&	0.238	&	0.098	\\
0	&&	0.240	&	0.643	&&	0.237	&	0.560	&&	0.236	&	0.503	&&	0.236	&	0.460	\\
\hline\hline																			
HMSE	&& \multicolumn{2}{c}{91069385}	&&	\multicolumn{2}{c}{95366837}&&	\multicolumn{2}{c}{97457899}&&	\multicolumn{2}{c}{98678459}	\\ \hline
\end{tabular}

  \bigskip

 \begin{tabular}{ l c c c c c c c c c c c c c c c c c c c }
 \multicolumn{12}{l}{(b) $-1/+2/pen$ system with various $pen$}\\

 \hline
&$pen$ &  \multicolumn{2}{c}{0} && \multicolumn{2}{c}{1} && \multicolumn{2}{c}{2} && \multicolumn{2}{c}{3}  \\\cline{3-4} \cline{6-7}\cline{9-10}\cline{12-13}

 Level $\ell$ && $\tilde{\zeta}(\ell)$ & $\mathbb{P}(L=\ell)$ && $\tilde{\zeta}(\ell)$ & $\mathbb{P}(L=\ell)$ && $\tilde{\zeta}(\ell)$ & $\mathbb{P}(L=\ell)$ &&  $\tilde{\zeta}(\ell)$ & $\mathbb{P}(L=\ell)$ \\
 \hline
14	&&	0.933	&	0.174	&&	0.865	&	0.238	&&	0.838	&	0.282	&&	0.824	&	0.316	\\
13	&&	0.431	&	0.047	&&	0.341	&	0.031	&&	0.298	&	0.024	&&	0.275	&	0.021	\\
12	&&	0.348	&	0.025	&&	0.302	&	0.021	&&	0.277	&	0.018	&&	0.262	&	0.017	\\
11	&&	0.309	&	0.017	&&	0.280	&	0.016	&&	0.263	&	0.015	&&	0.254	&	0.014	\\
10	&&	0.286	&	0.013	&&	0.265	&	0.013	&&	0.254	&	0.013	&&	0.247	&	0.013	\\
9	&&	0.271	&	0.011	&&	0.256	&	0.011	&&	0.248	&	0.012	&&	0.244	&	0.011	\\
8	&&	0.260	&	0.010	&&	0.249	&	0.011	&&	0.244	&	0.013	&&	0.241	&	0.014	\\
7	&&	0.253	&	0.010	&&	0.245	&	0.011	&&	0.241	&	0.011	&&	0.240	&	0.011	\\
6	&&	0.248	&	0.011	&&	0.241	&	0.014	&&	0.239	&	0.016	&&	0.239	&	0.017	\\
5	&&	0.244	&	0.012	&&	0.240	&	0.013	&&	0.239	&	0.013	&&	0.239	&	0.013	\\
4	&&	0.241	&	0.018	&&	0.238	&	0.023	&&	0.238	&	0.027	&&	0.239	&	0.030	\\
3	&&	0.240	&	0.019	&&	0.238	&	0.019	&&	0.238	&	0.017	&&	0.239	&	0.016	\\
2	&&	0.238	&	0.050	&&	0.238	&	0.064	&&	0.239	&	0.072	&&	0.239	&	0.077	\\
1	&&	0.238	&	0.043	&&	0.238	&	0.033	&&	0.239	&	0.026	&&	0.239	&	0.022	\\
0	&&	0.237	&	0.540	&&	0.236	&	0.483	&&	0.236	&	0.441	&&	0.235	&	0.407	\\
\hline\hline																			
HMSE	&& \multicolumn{2}{c}{93352237}	&&	\multicolumn{2}{c}{96970265}&&	\multicolumn{2}{c}{98659783}&&	\multicolumn{2}{c}{99617247}	\\ \hline	
\end{tabular}
\label{table.4}
\end{table}

%
%
%
%


\section{Conclusion}\label{conc}

In automobile third-party liability insurance, BMS has been broadly used as a posteriori ratemaking mechanism which helps insurers rate a policyholder's risk more accurately and thus premium can be calculated more fairly to reflect his/her risk. BMS is also designed to stimulate drivers to practise safer driving by adjusting the premium on policy renewal based on the claim history in the previous year. However, a typical BMS immediately offering a reward to policyholders without claim in the previous year may tend to move policyholders easily towards lower BM levels. 
Hence, to resolve the unbalanced issue due to a concentration of policyholders in the lowest BM level, to prevent a quick recovery (in terms of paying lower premium) for those drivers who have a good claim history only for a single period, and to better distinguish between drivers who are consistently good and those who are only temporarily good, in this paper we introduce a ``period of penalty'' to count the number of consecutive claim-free years required to lower the BM level. Although other more rigid BMS transition rules such as a $-1/+h$ system with $h=2$ or $h=3$ to severely penalize claims in the previous year may put less pressure on the premium income of the insurer, 
these may weaken the product's competitiveness in the market and the insurers may lose better customers. Through numerical illustrations with a data set, we have demonstrated that, compared to a $-1/+2$ system, the introduction of a penalty period to a $-1/+1$ system can result in (i) lower values of optimal BM relativities which can potentially improve marketability of the product; and (ii) better separation of risks because the BM level 0 is less concentrated as drivers who are not consistently good are moved to higher BM levels. With the policyholders aware of the requirement of consecutive claim-free years to enjoy lower BM relativities at lower BM levels, they can be well motivated to drive more safely. 

Moreover, in the afore-mentioned modified BMS with a penalty period, we take into account the randomness of (i) frequency only or (ii) both frequency and severity when modeling the unobserved risk characteristics and to construct the optimal relativities associated with the BM levels. These relativities are then used to determine the premium actually charged to the policyholders staying in the corresponding BM levels. We hope that our extended BMS allowing for dependency between the frequency and the severity of claims via dependency of the unobserved risk characteristics provides some insights for improving the classical BMS, in particular, when the insurers have the freedom to alter the features of the BMS to reflect claim experience more accurately based on the claim history of policyholders in multiple years with product competitiveness in mind.


\vspace{+.2in}

\bibliographystyle{apalike}

\begin{thebibliography}{}

\bibitem[Baumgartner et~al., 2015]{Czado2015}
Baumgartner, C., Gruber, L.~F., and Czado, C. (2015).
\newblock Bayesian total loss estimation using shared random effects.
\newblock {\em Insurance: Mathematics and Economics}, 62:194--201.

\bibitem[Czado et~al., 2012]{Czado}
Czado, C., Kastenmeier, R., Brechmann, E.~C., and Min, A. (2012).
\newblock A mixed copula model for insurance claims and claim sizes.
\newblock {\em Scandinavian Actuarial Journal}, 2012(4):278--305.

\bibitem[De~Jong and Heller, 2008]{deJong2008}
De~Jong, P. and Heller, G.~Z. (2008).
\newblock {\em Generalized Linear Models for Insurance Data}.
\newblock Cambridge University Press.

\bibitem[Denuit et~al., 2007]{Denuit2}
Denuit, M., Mar{\'e}chal, X., Pitrebois, S., and Walhin, J.-F. (2007).
\newblock {\em Actuarial Modelling of Claim Counts: Risk Classification,
  Credibility and Bonus-Malus Systems}.
\newblock John Wiley \& Sons.

\bibitem[Frees et~al., 2016]{Frees4}
Frees, E.~W., Lee, G., and Yang, L. (2016).
\newblock Multivariate frequency-severity regression models in insurance.
\newblock {\em Risks}, 4(1):4.

\bibitem[Frees and Valdez, 1998]{Frees}
Frees, E.~W. and Valdez, E.~A. (1998).
\newblock Understanding relationships using copulas.
\newblock {\em North American Actuarial Journal}, 2(1):1--25.

\bibitem[Garrido et~al., 2016]{Garrido}
Garrido, J., Genest, C., and Schulz, J. (2016).
\newblock Generalized linear models for dependent frequency and severity of
  insurance claims.
\newblock {\em Insurance: Mathematics and Economics}, 70:205--215.

\bibitem[Hern{\'a}ndez-Bastida et~al., 2009]{Bastida}
Hern{\'a}ndez-Bastida, A., Fern{\'a}ndez-S{\'a}nchez, M.~P., and
  G{\'o}mez-D{\'e}niz, E. (2009).
\newblock The net {B}ayes premium with dependence between the risk profiles.
\newblock {\em Insurance: Mathematics and Economics}, 45(2):247--254.

\bibitem[Lemaire, 1985]{L1985}
Lemaire, J. (1985).
\newblock {\em Automobile Insurance: Actuarial Models}.
\newblock Kluwer Academic Publishers, Boston.

\bibitem[Lemaire, 1995]{L1995}
Lemaire, J. (1995).
\newblock {\em Bonus-Malus Systems in Automobile Insurance}.
\newblock Kluwer Academic Publishers, Boston.

\bibitem[Lemaire, 1998]{L1998}
Lemaire, J. (1998).
\newblock Bonus-malus systems: the {E}uropean and {A}sian approach to
  merit-rating.
\newblock {\em North American Actuarial Journal}, 2:1:26--38.

\bibitem[McCullagh and Nelder, 1989]{Nelder1989}
McCullagh, P. and Nelder, J.~A. (1989).
\newblock {\em Generalized Linear Models, second ed}.
\newblock Chapman and Hall, London.

\bibitem[Oh et~al., 2020a]{Ohdouble}
Oh, R., Lee, K.~S., Park, S.~C., and Ahn, J.~Y. (2020a).
\newblock Double-counting problem of the bonus-malus system.
\newblock {\em Insurance: Mathematics and Economics}, 93:141--155.

\bibitem[Oh et~al., 2020b]{PengAhn}
Oh, R., Shi, P., and Ahn, J.~Y. (2020b).
\newblock Bonus-malus premiums under the dependent frequency-severity modeling.
\newblock {\em Scandinavian Actuarial Journal}, 2020(3):172--195.

\bibitem[Park et~al., 2018]{Park2018does}
Park, S.~C., Kim, J.~H., and Ahn, J.~Y. (2018).
\newblock Does hunger for bonuses drive the dependence between claim frequency
  and severity?
\newblock {\em Insurance: Mathematics and Economics}, 83:32--46.

\bibitem[Pitrebois et~al., 2003a]{PDW03}
Pitrebois, S., Denuit, M., and Walhin, J.~F. (2003a).
\newblock Marketing and bonus-malus systems.
\newblock {\em Paper presented at the 2003 ASTIN Colloquium, Berlin}.

\bibitem[Pitrebois et~al., 2003b]{Pitrebois2003}
Pitrebois, S., Denuit, M., and Walhin, J.-F. (2003b).
\newblock Setting a bonus-malus scale in the presence of other rating factors:
  Taylor's work revisited.
\newblock {\em ASTIN Bulletin}, 33(2):419--436.

\bibitem[Shi et~al., 2015]{Peng}
Shi, P., Feng, X., and Ivantsova, A. (2015).
\newblock Dependent frequency-severity modeling of insurance claims.
\newblock {\em Insurance: Mathematics and Economics}, 64:417--428.

\bibitem[Tan et~al., 2015]{Chong}
Tan, C.~I., Li, J., Li, J. S.-H., and Balasooriya, U. (2015).
\newblock Optimal relativities and transition rules of a bonus-malus system.
\newblock {\em Insurance: Mathematics and Economics}, 61:255--263.

\bibitem[Taylor, 1997]{Taylor1997}
Taylor, G. (1997).
\newblock Setting a bonus-malus scale in the presence of other rating factors.
\newblock {\em ASTIN Bulletin}, 27(2):319--327.

\bibitem[Valdez et~al., 2020]{AhnValdez2}
Valdez, E.~A., Jeong, H., Ahn, J.~Y., and Park, S. (2020).
\newblock Generalized linear mixed models for dependent compound risk models.
\newblock {\em Variance, in press}.

\end{thebibliography}


\vspace{+.2in}

\appendix

\section{Generalized linear models (GLMs)}\label{EDF}


The {\it exponential dispersion family} (EDF) in \citet{Nelder1989} can be considered for modeling the random components of the frequency and the severity of insurance claims in the GLMs. The EDF with mean $\mu$ and dispersion $\psi$, whose distribution function is denoted by $F(\cdot;\mu, \psi)$, has the probability density/mass function (in $y$)
\[
p(y|\vartheta,\psi)=\exp[(y\vartheta-b(\vartheta))/\psi+c(y,\psi)],
\]
where $\vartheta$ is the canonical parameter, and $b(\cdot)$ and $c(\cdot)$ are predetermined functions. The mean of the distribution can be expressed as $\mu=b^\prime(\vartheta)$ 
and the variance is $b''(\vartheta)\psi\equiv V(\mu)$, 
where the inverse of $b'(\cdot)$ is known as the canonical link function and $V(\cdot)$ is called the variance function.\vspace{-.1in}

\section{Model \ref{mod.1}}\label{model1}
In the frequency random effect model:\vspace{-.05in}
\begin{enumerate}
    \item[i.] For the $i$-th policyholder, the conditional distribution of the number of claims $N_{it}$ in the $t$-th policy year given the observed risk characteristics $\boldsymbol{X}_{i}=\boldsymbol{x}_{i}$ and unobserved risk characteristics  $\Theta_i=\theta_i$ is specified as
       \begin{equation}\label{eq.402}
        N_{it}|(\Theta_i=\theta_i,\boldsymbol{X}_{i}=\boldsymbol{x}_{i}) \iid {F}(\cdot ; \lambda_{i}\theta_i, \psi ), 
        \end{equation}
   where the distribution function $F$ has mean parameter $\lambda_{i}\theta_i$ with $\lambda_{i}= \eta^{-1}(\boldsymbol{x}_{i}\boldsymbol{\beta})$ and some parameter $\psi$. The ``i.i.d.'' in \eqref{eq.402} means that, conditional on $\Theta_i=\theta_i$ and $\boldsymbol{X}_{i}=\boldsymbol{x}_{i}$, the claim numbers $N_{it}$'s for different $t$'s are independent and identically distributed. When $F$ is in the class of EDF in \ref{EDF}, $\psi$ is the dispersion parameter.

     \item[ii.] The random variable $\Theta_i$ for the $i$-th policyholder's unobserved risk characteristics concerning the claim frequency is assumed to be independent of the observed risk characteristics $\boldsymbol{X}_{i}$. The $\Theta_i$'s are i.i.d. having distribution function $G$ and density $g=G'$, and we write
   \begin{equation}\label{eq.403}
  \Theta_i\iid G.
  \end{equation}
A generic variable with distribution $G$ is denoted by $\Theta$, and we assume $\mathbb{E}[\Theta] = 1 $ for convenience.

\end{enumerate}\vspace{-.2in}

\section{Model \ref{mod.2}}\label{model2}

To account for various forms of dependence such as frequency-frequency, severity-severity and frequency-severity, we consider a copula-based random effect model as in \cite{PengAhn}:\vspace{-.05in}
\begin{enumerate}
    \item[i.] The frequency component $N_{it}$ is specified using a count regression model conditional on the observed risk characteristics $\boldsymbol{X}_{i}^{[1]}$  
    and the unobserved risk $\Theta_i^{[1]}$ such that
       \begin{equation}\label{eq.n}
        N_{it}|(\Theta_i^{[1]}=\theta_i^{[1]},\boldsymbol{X}_{i}^{[1]}=\boldsymbol{x}_{i}^{[1]}) \iid {F}_1(\cdot ; \lambda_{i}^{[1]}\theta_i^{[1]}, \psi^{[1]}), 
        \end{equation}
   where the distribution function $F_1$ has mean parameter $\lambda_{i}^{[1]}\theta_i^{[1]}$  with $\lambda_{i}^{[1]}= \eta_1^{-1}(\boldsymbol{x}_{i}^{[1]}\boldsymbol{\beta}^{[1]})$ and some parameter $\psi^{[1]}$. The ``i.i.d.'' in \eqref{eq.n} means that $N_{it}$'s for different $t$'s are (conditionally) independent and identically distributed. If $F_1$ belongs to EDF, then $\psi^{[1]}$ corresponds to the dispersion parameter. Here we implicitly assume that the conditional distribution in \eqref{eq.n} does not change even if we are given further information on the risk characteristics $ \boldsymbol{X}_{i}^{[2]} $ and $\Theta_i^{[2]}$ pertaining to the severity.

   \item[ii.] The distribution of the $i$-th policyholder's $j$-th claim severity $Y_{itj}$ in the $t$-th year conditional on the observed risk characteristics $ \boldsymbol{X}_{i}^{[2]} $ and the unobserved risk characteristics $\Theta_i^{[2]}$ is specified as
  \begin{equation}\label{model2Yassump}
       Y_{itj}|(\Theta_i^{[2]}=\theta_i^{[2]}, \boldsymbol{X}_{i}^{[2]}=\boldsymbol{x}_{i}^{[2]}) \iid {F}_2(\cdot\,;\, \lambda_{i}^{[2]}\theta_i^{[2]}, \psi^{[2]}),
  \end{equation}
where the distribution function $F_2$ is in EDF with mean parameter $\lambda_{i}^{[2]}\theta_i^{[2]}$ 
     (where $\lambda_{i}^{[2]}=\eta_2^{-1}(\boldsymbol{x}_{i}^{[2]} \boldsymbol{\beta}^{[2]})$) and dispersion parameter $\psi^{[2]}$. The ``i.i.d.'' in \eqref{model2Yassump} means that $Y_{itj}$'s for different $t$'s and different $j$'s are (conditionally) independent and identically distributed. It is understood that the conditional distribution in \eqref{model2Yassump} remains the same even if we are further given the risk characteristics $ \boldsymbol{X}_{i}^{[1]} $ and $\Theta_i^{[1]}$ pertaining to the frequency. By the property of EDF, the distribution of the average claim amount $M_{it}$ (see \eqref{sumaverageDef}) conditional on the observed risk characteristics $ \boldsymbol{X}_{i}^{[2]} $, the unobserved risk characteristics $\Theta_i^{[2]}$, and the frequency $N_{it}$ is specified via
  \begin{equation}\label{eq.s}
       M_{it}|(N_{it}=n_{it}, \Theta_i^{[2]}=\theta_i^{[2]}, \boldsymbol{X}_{i}^{[2]}=\boldsymbol{x}_{i}^{[2]}) \iid {F}_2(\cdot\,;\, \lambda_{i}^{[2]}\theta_i^{[2]}, \psi^{[2]}/n_{it}), \qquad n_{it}>0,
  \end{equation}
with updated dispersion parameter $\psi^{[2]}/n_{it}$, and
  \[
       \mathbb{P}(M_{it}=0|N_{it}=n_{it}, \Theta_i^{[2]}=\theta_i^{[2]}, \boldsymbol{X}_{i}^{[2]}=\boldsymbol{x}_{i}^{[2]})=1, \qquad  n_{it}=0.
  \]

The ``i.i.d.'' in \eqref{eq.s} means that $M_{it}$'s for different $t$'s are (conditionally) independent and identically distributed.


   \item[iii.] The bivariate random vector of unobserved risk characteristics $(\Theta_i^{[1]},\Theta_i^{[2]})$ of the $i$-th policyholder, which is assumed to be independent of the observed risk characteristics $(\boldsymbol{X}_{i}^{[1]}, \boldsymbol{X}_{i}^{[2]})$, has joint distribution function $H$ specified via
   \begin{equation}\label{eq.4}
  (\Theta_i^{[1]}, \Theta_i^{[2]})\iid H=C(G_1, G_2), 
  \end{equation}
  where $G_1$ and $G_2$ denote the marginal distribution functions of $\Theta_i^{[1]}$ and $\Theta_i^{[2]}$ respectively, and $C$ is a bivariate copula. The ``i.i.d.'' in \eqref{eq.4} means that $(\Theta_i^{[1]}, \Theta_i^{[2]})$'s for different $i$'s are independent and identically distributed. A generic pair of $(\Theta_i^{[1]},\Theta_i^{[2]})$ will be denoted by $(\Theta^{[1]},\Theta^{[2]})$. We shall use $g_1$, $g_2$, and $h$ to denote the density versions of $G_1$, $G_2$, and $H$, respectively.
For convenience, it is further assumed that
\[
       \mathbb{E}[\Theta^{[1]}] = 1
        \quad\hbox{and}\quad \mathbb{E}[\Theta^{[2]}] = 1.
  \]

\end{enumerate}

\end{document}